\title{How to build a cross-impact model from first principles: \\
Theoretical requirements and empirical results \thanks{We warmly thank J.P. Bouchaud, Z. Eisler, B. T\'oth, M. Rosenbaum and A. Fosset for fruitful discussions. This research was conducted within the \textit{Econophysics \& Complex Systems} Research Chair under the aegis of the Fondation du Risque, a joint initiative by the \textit{Fondation de l'\'Ecole polytechnique, l'\'Ecole polytechnique} and Capital Fund Management. M. Tomas also acknowledges the support of the chairs \textit{Analytics and models for financial regulation}, \textit{Deep finance and statistics} and \textit{Machine learning and systematic methods.}}}
\author[1,2,3]{\normalsize Mehdi Tomas}
\author[3,4]{Iacopo Mastromatteo}
\author[1,3,4]{Michael Benzaquen}
\affil[1]{\normalsize LadHyX UMR CNRS 7646, Ecole Polytechnique, 91128 Palaiseau Cedex, France}
\affil[2]{\normalsize CMAP UMR CNRS 7641, Ecole Polytechnique, 91128 Palaiseau Cedex, France}
\affil[3]{\normalsize Chair of Econophysics \& Complex Systems, Ecole Polytechnique, 91128 Palaiseau Cedex, France}
\affil[4]{\normalsize Capital Fund Management, 23-25, Rue de l’Universit\'e 75007 Paris, France}
\begin{document}
\renewcommand{\arraystretch}{0.8}
\maketitle

\begin{abstract}
Trading a financial instrument pushes its price and those of other assets, a phenomenon known as cross-impact. To be of use, cross-impact models must fit data and be well-behaved so they can be applied in applications such as optimal trading. To address these issues, we introduce a set of desirable properties which constrain cross-impact models. We classify cross-impact models according to which properties they satisfy and stress them on three different asset classes to evaluate goodness-of-fit. We find that two models are robust across markets, but only one satisfies all desirable properties and is appropriate for applications.
\end{abstract}

\textbf{Keywords: }cross-impact, market impact, market microstructure, trading costs \\

\textbf{AMS 2000 subject classifications: } 60G44, 60G55, 62M10.

\section*{Introduction}

Trading pressure moves prices, a now well-established phenomenon known as \textit{market impact} \cite{Bouchaud2018TradesPrices,Almgren2005DirectImpact,Torre1997BARRAHandbook}. In fact, market impact has been measured in many independent studies and is robust across assets, time periods and markets. A more subtle effect is that trading pressure from one asset can move the price of another. This effect has been dubbed \textit{cross-impact}. Cross-impact transmits information across assets and amplifies market shocks. Many papers incorporate cross-impact in applications but assume that the parameters of the model are known \cite{ekren2018optimal,ekren2019portfolio,garleanu2016dynamic,tsoukalas2019dynamic}. To be applied in practice, they require a calibration technique to estimate cross-impact from data. 
\\ \\
The importance of cross-impact has sparked recent interest in devising calibration methodologies from empirical data \cite{hasbrouck2001common,pasquariello2015strategic,Benzaquen2017DissectingAnalysis,wang2016cross,schneider2017,tomas2021cross,rosenbaum2021characterisation}. However, empirical studies focus on goodness-of-fit to calibrate cross-impact models or make simplifying assumptions which may work well on certain asset classes but break on others. Therefore, from the literature, we cannot determine whether there exists a universally robust and statistically accurate cross impact model, nor more specifically one which is suitable in other applications, such as optimal trading.
\\ \\
This paper seeks to bridge this gap by introducing desirable properties of cross-impact models, classifying models according to which properties they satisfy and stress-testing these models on different markets to assess which provide good empirical fit. In particular, we find that there is a single cross-impact model robust across asset classes and well-suited for applications.
\\ \\
We briefly comment on the links between our approach and the literature.
\\ \\
The paper most related to ours is \cite{rosenbaum2021characterisation}. There, the authors characterise suitable cross-impact models within a market where trades are modeled with Hawkes processes. The resulting cross-impact models are well-behaved and can be calibrated, which the authors illustrate on E-Mini S\&P 500 futures.
\\ \\
This paper stress-tests many different cross-impact models and thus contributes to the literature on the calibration and goodness-of-fit testing of cross-impact models \cite{Benzaquen2017DissectingAnalysis,schneider2017,rosenbaum2021characterisation}. However, papers on the literature focus on one (or few) models at a time on a particular asset class. Thus, this paper shines light on which models perform best on a variety of markets.
\\ \\
A set of theoretical studies have attempted to characterise suitable cross-impact models from certain properties. Unfortunately, a pure no-arbitrage framework as in \cite{Alfonsi2016MultivariateFunctions,schneider2017} is not sufficiently restrictive to prescribe a calibration methodology. We use some of their results to classify static cross-impact models which allow for arbitrage.
\\ \\ 
Other studies obtain cross-impact models via interactions of different agents. While the mean-field framework for optimal execution of \cite{Lehalle2019ACorrelations} provides one explanation of the many possible phenomena underlying cross-impact, it does not provide a recipe one may use on empirical data. In the optimal market making literature, \cite{bergault2018closed} finds that the liquidation costs of a market maker faces when he holds an inventory $q$ is of the form $- q^\top \Lambda q$, where $\Lambda$ can be estimated in practice. In fact, the same $\Lambda$ can be derived from the multivariate Kyle framework \cite{delMolino2018TheDifferent,Caballe1994ImperfectNeutrality}. We show that this $\Lambda$ plays a special role, as it is the only cross-impact model which satisfies all desirable properties. This partially explains why this cross-impact model appears in a variety of seemingly unrelated settings.
\\ \\
The paper is organized as follows. \cref{sec:setup_assumptions} introduces the setup of the paper. \cref{sec:axioms} lists axioms and \cref{sec:candidate_cross_impact_models} cross-impact models of interest, along with the axioms they satisfy. \cref{sec:illustrative_examples_empirical} stress-tests cross-impact models on a variety of markets. We conclude by stressing the main contributions of the paper and discussing open questions and directions for future work.

\section*{Notation}
\label{sec:notation}

The set of $n$ by $n$ real-valued square matrices is denoted by $\mat{n}$, the set of orthogonal matrices by $\orth{n}$, the set of real symmetric positive semi-definite matrices by $\spd{n}$ and the set of real symmetric positive definite matrices by $\pd{n}$. Given a matrix $A$ in $\mat{n}$, $A^\top$ denotes its transpose. Given $A$ in $\spd{n}$, we write $A^{1/2}$ for a matrix such that $A^{1/2} (A^{1/2})^\top = A$ and $\sqrt{A}$ for the matrix square root, the unique positive semi-definite symmetric matrix such that $(\sqrt{A})^2 = A$. We write $\ker{(M})$ for the null space of a matrix $M \in \mat{n}$, $\Pi_V$ for the projector on a linear subspace of $V \in \mathbb{R}^n$ and $\bar \Pi_V = \id - \Pi_V$ for the orthogonal projector. Finally, given a vector $\boldm{v} \in \reals{n}$, we write $\boldm{v} = (v_1, \dots, v_n)$ and $\diag{\boldm{v}}$ for the diagonal matrix with diagonal components the components of $\boldm{v}$.


\section{Cross-impact models as functions of market data}
\label{sec:setup_assumptions}

To relate trades to prices, we observe the mid-prices and trades of $\nassets$ different assets, both being binned on a regular time interval of length $\Delta t$. We denote by $\bp^i_t$ the opening price of Asset $i$ in the time window $[t, t+\Delta t]$ and by $\bp_t = (\bp^1_t, \cdots, \bp^\nassets_t)$ the vector of asset prices. We write $\bq^i_t$ the net market order flow traded in the same period, i.e. the signed sum of the volumes of all trades in that window, counting buy trades volume positively and sell trades volume negatively. Similarly, $\bq_t = (\bq^1_t, \cdots, \bq^\nassets_t)$ is the vector of net traded order flow.
\\ \\
On a given trading day, our goal is to relate the time series of net order flows $\{ \bq_0, \bq_{\Delta t}, \cdots, \bq_T\}$ with the time series of prices $\{\bp_0, \bp_{\Delta t}, \cdots, \bp_T \}$. For a given asset, it is classically admitted that price changes and net order flow are linearly related, although the linear relationship breaks down for large order flow values \cite{patzelt2018universal}. Inspired by this, we also assume this is true across assets. Furthermore, to emphasize the cross-sectional features of the problem, we discard the influence of past order flows. This leads to the following assumption.
\begin{assumption}
\label{ass:linear}
Price changes $\bdp_t := p_{t+\Delta t} - p_t$ and order flow imbalances $\bq_t$ are linearly related, i.e.
\begin{equation}
  \label{eq:lin_impact_intro}
\bdp_t = \Lambda_t \bq_t + \bmeta_t \, ,
\end{equation}
where the $\nassets \times \nassets$ matrix $\Lambda_t$ is called the cross-impact matrix and $\bmeta_t = (\bmeta^1_t, \dots, \bmeta^\nassets_t)$ is a vector of zero-mean random variables representing exogenous noise.
\end{assumption}

In \cref{eq:lin_impact_intro}, price changes and order flow imbalances are known and we have to choose the cross-impact matrix $\Lambda_t$. Our choice of cross-impact matrix influences the quality of fit of the model via the size of the residuals of \cref{eq:lin_impact_intro}. The focus of this paper is to find the right choice for the cross-impact matrix $\Lambda_t$.
\\ \\
While many choices for the cross-impact matrix are possible, we focus on those which depend on the parameters of the true data generating process. In the particular context where all random variables are Gaussian, then second-order statistics are sufficient statistics. This motivates the next assumption.

\begin{assumption}
\label{ass:sufficient_stats}
In the rest of the paper, we assume that price dynamics are given by
\begin{equation}
  \label{eq:lin_impact}
\bdp_t = \Lambda(\Sigma_t, \Omega_t, R_t) \bq_t + \bmeta_t \, ,
\end{equation}
where $\Lambda \colon \spd{n} \times \pd{n} \times \mat{n} \to \mat{n}$ is called a \emph{cross-impact model},  $\Sigma_t = \cov(\bdp_t)$ is the price change covariance matrix, $\Omega_t := \cov(q_t)$ is the order flow covariance matrix, $R_t = \E \left[ (\bdp_t - \E[\bdp_t])(q_t - \E[q_t])^\top \right]$ is the response matrix and $\bmeta_t = (\bmeta^1_t, \dots, \bmeta^\nassets_t)$ is a vector of zero-mean random variables representing exogenous noise.
\end{assumption}

The cross-impact model $\Lambda$ of \cref{eq:lin_impact} is a function of the three key second-order statistics which describe our market. Indeed, $\Sigma_t$ quantifies the  co-variation of returns, $\Omega_t$ captures co-trading of different assets, and $R_t$ reflects the average change of asset prices with traded order flow. By examining cross-impact models, we abstract the particular value of a cross-impact model for a given set of second-order statistics. Instead, this allows us to examine the cross-impact model across markets.
\\ \\
The main purpose of this paper is to find a suitable cross-impact impact model $\Lambda$ in \cref{eq:lin_impact} given a set of observations of market data and corresponding statistics $\Sigma_t, \Omega_t, R_t$. The next section discusses how to choose a proper cross-impact model $\Lambda$. First, we comment on the validity of our assumptions.

\begin{remark}[Validity of the static approximation]
Overall, \cref{ass:linear} is relevant to describe price impact shortly induced after trading for small portfolios. To assess its validity, we compare our setup to models which account for past order flow imbalances \cite{schneider2017,Alfonsi2016MultivariateFunctions,Benzaquen2017DissectingAnalysis} where price dynamics are
$$
\bp_t = \sum_{s \leq t} G(t-s) \bq_s + \bm \xi_t \, ,
$$
where $G \colon t \mapsto G(t) \in \mat{\nassets}$ captures the dependence on past order flow and $\bm \xi_t$ is a vector of zero-mean random variables. Then
\begin{align*}
\bdp_t &= G(0) \bq_{t} + \bmeta_t + \sum_{s < t} \left( G(t+\Delta t - s) - G(t-s) \right) \bq_s \, ,
\end{align*}
where $\bmeta_t := \bm \xi_{t+\Delta t} - \bm \xi_t$ is a vector of zero-mean random variables. \cref{ass:linear} ignores the last term of the right hand side in the above equation. Therefore, we can measure the validity of our approximation by comparing $\Gc_{ij} := \sum_{s < t} G_{ij}(t+\Delta t - s) - G_{ij}(t-s)$ and $G_{ij}(0)$, say with $\kappa_{ij} := \frac{\Gc_{ij}}{G_{ij}(0)}$.
\\ \\
For $\Delta t = \text{5 minutes}$ and on stocks, Figure 3 of \cite{Benzaquen2017DissectingAnalysis} shows $\kappa_{ij} \approx 20 \%$. Figure 5 of \cite{schneider2017} works in transaction time on bonds but a rough estimate for $\Delta t = \text{1 minute}$ yields $\kappa_{ij} \approx 30 \%$. This indicates our setup is relevant to capture the salient features of cross-impact.
\end{remark}

Before proceeding, we comment on the structure of the trading costs in our setup.

\begin{remark}[Trading costs]
\cref{eq:lin_impact} gives a prediction of portfolio trading costs. In particular, if one assumes that the difference between the arrival price and the execution price is given by $\bdp_t$, the cost incurred after the execution of the portfolio $\boldm{\xi}$ is
\begin{equation}
    \label{eq:cost_portfolio}
    \mathcal{C}(\boldm{\xi}) = \boldm{\xi}^\top \bdp_t  = \boldm{\xi}^\top \Lambda_t \boldm{\xi} + \boldm{\xi}^\top \Lambda_t \bar \bq_t + \boldm{\xi}^\top \bmeta_t \, ,
\end{equation}
where $\bar \bq_t$ is the order flow imbalance due to trades of other market participants. Thus 
$$
\E[\mathcal{C}(\boldm{\xi})] = \boldm{\xi}^\top \Lambda_t \boldm{\xi} + \boldm{\xi}^\top \Lambda_t \E[\bar \bq_t \mid \boldm{\xi}] + \boldm{\xi}^\top \bmeta_t \, ,
$$
where $\E[\bar \bq_t \mid \boldm{\xi}]$ represents the alignment of the market trades' conditioned to the traded portfolio. This may be non-zero because of herding, where our trades cause other investors to trade. The treatment of this term depends on the trading strategy and is outside of the scope of this paper. Thus, we assume that $\E[\bar \bq_t \mid \boldm{\xi}] = \E[\bar \bq_t] = 0$ so that the average impact costs of trading the portfolio $\boldm{\xi}$ in our setting is $\E[\mathcal{C}(\boldm{\xi})] = \boldm{\xi}^\top \Lambda_t \boldm{\xi}$.
\end{remark}

The next section examines desirable properties we want cross-impact models to satisfy.

\section{Axioms: the desirable properties of cross-impact models}
\label{sec:axioms}

To find a proper cross-impact model for \cref{eq:lin_impact}, we have two potentially conflicting objectives. The first is goodness of fit, or how well our model captures the influence of order flow to explain price changes. Cross-impact models are often selected on this basis alone \cite{Benzaquen2017DissectingAnalysis,schneider2017,wang2016cross}. This may yield good empirical fit but these models neglect another important aspect: the theoretical implications of our market model of \cref{eq:lin_impact}. For example, does our choice of cross-impact model imply that agents can abuse cross-impact to make a profit on average? These properties are critical if we want to use cross-impact models in applications, such as \cite{ekren2018optimal,ekren2019portfolio,garleanu2016dynamic,tsoukalas2019dynamic}.
\\ \\
To select a model on the basis of its implications, we need to establish which properties we would like a cross-impact model to satisfy. To address this issue, this section proposes a list of desirable properties of cross-impact models, which we dub \textit{axioms}. Each axiom translates a desired behaviour of cross-impact model, grounded in the implications of the cross-impact model for the evolution of prices.

\subsection{Symmetry}
The first type of axioms we introduce are \textit{symmetry} axioms. They ensure that the cross-impact is well-behaved under financially-grounded transformations of its variables. First, the cross-impact model should adapt to a re-ordering of the assets. This yields the following axiom.
\begin{axiom}[Permutational invariance]
\label{Axiom:permutational_invariance}
A cross-impact model $\Lambda$ is \emph{permutation-invariant} if, for any permutation matrix $P$ and $(\Sigma, \Omega, R) \in (\spd{n} \times \pd{n} \times \mat{n})$, 
\begin{equation*}
    \Lambda(P \Sigma P^\top, P \Omega P^\top, P R P^\top) = P \Lambda(\Sigma, \Omega, R) P^\top .
\end{equation*}
\end{axiom}
In the absence of any return or order flow correlation among assets, we expect price changes to be independent. The cross-impact model should then respect this property. This motivates the following axiom.
\begin{axiom}[Direct invariance]
\label{Axiom:direct_invariance}
A cross-impact model $\Lambda$ is \emph{direct-invariant} if, for any $\boldm{\sigma}, \boldm{\omega} \in \mathbb{R}^{n}_{+}$ $\boldm{r} \in \mathbb{R}^n$, 
\begin{equation*}
    \Lambda(\diag{\boldm{\sigma}}^2, \diag{\boldm{\omega}}^2, \diag{\boldm{r}}) = \sum_{i=1}^{n} \Lambda(\sigma_i^2 \be_i \be_i^\top , \omega_i^2 \be_i \be_i^\top, r_i \be_i\be_i^\top) \, ,
\end{equation*}
where $\be_i$ is the $i$-th element of the canonical basis.
\end{axiom}
Impact is expressed in a choice of currency units. However, the chosen currency should not matter and cross-impact models should adapt accordingly. The next axiom translates this property.
\begin{axiom}[Cash invariance]
\label{Axiom:cash_invariance}
A cross-impact model $\Lambda$ is \emph{cash-invariant} if, for any $\alpha > 0$, and $(\Sigma, \Omega, R) \in (\spd{n} \times \pd{n} \times \mat{n})$, 
\begin{equation*}
    \Lambda(\alpha^2 \Sigma, \Omega, \alpha R) = \alpha \Lambda(\Sigma, \Omega, R).
\end{equation*}
\end{axiom}
Similarly, cross-impact models should account for changes in volume units. For example, stock splits can double the number of outstanding shares and halve their values (if we ignore microstructural effects such as tick size and lot rounding). This leads to the following axiom.
\begin{axiom}[Split invariance]
\label{Axiom:split_invariance}
A cross-impact model $\Lambda$ is \emph{split-invariant} if, for any diagonal matrix of positive elements $D \in \mat{n}$ and $(\Sigma, \Omega, R) \in (\spd{n} \times \pd{n} \times \mat{n})$, 
$$
\Lambda(D^{-1} \Sigma D^{-1}, D \Omega D, D^{-1} R D) = D^{-1} \Lambda(\Sigma, \Omega, R) D^{-1}.
$$
\end{axiom}
The profit and loss of traders is invariant under orthogonal transformations (see \cref{eq:cost_portfolio}). It is natural to look for cross-impact models that share this property. As before, this ignores microstructural effects such as exchange trading fees, bid-ask spreads, etc. The following axiom introduces this property.
\begin{axiom}[Rotational invariance]
\label{Axiom:invariance_rotation}
A cross-impact model $\Lambda$ is \emph{rotation-invariant} if, for any real orthogonal matrix $O \in \orth{n}$ and $(\Sigma, \Omega, R) \in (\spd{n} \times \pd{n} \times \mat{n})$,
$$
\Lambda(O \Sigma O^\top, O \Omega O^\top, O R O^\top) = O \Lambda(\Sigma, \Omega, R) O^\top.
$$
\end{axiom}
We say of a model which does not satisfy \cref{Axiom:invariance_rotation} that it has a privileged basis. Note that any cross-impact model which satisfies \cref{Axiom:invariance_rotation,Axiom:split_invariance} is invariant under the action of any non-singular matrix $M$.
\\ \\
Among symmetry axioms, we expect permutational, direct and cash invariance (\cref{Axiom:permutational_invariance,Axiom:cash_invariance}) to be of critical importance as models which do not respect them would behave oddly. Split invariance ~(\cref{Axiom:split_invariance}) is realistic but it may break on small timescales due to microstructural effects. On the other hand, rotational invariance (\cref{Axiom:invariance_rotation}) is less plausible because markets have transaction costs, leverage constraints and other effects which break this symmetry.

\subsection{Arbitrage}
This family of axioms clarifies what properties a cross-impact model should satisfy to excludes any statistical arbitrage in the sense of~\cite{Gatheral2009No-Dynamic-ArbitrageImpact}, i.e. round-trip trading strategies with positive average profit. The first axiom involves \textit{static} arbitrages: single period trading strategies with average negative costs.
\begin{axiom}[Positive semi-definiteness]
\label{Axiom:positive_semidefinite}
The cross-impact model $\Lambda$ takes values in the space of positive semi-definite matrices. 
\end{axiom}
The next axiom involves \emph{dynamic} arbitrages, i.e. multi-period trading strategies in the spirit of~\cite{Alfonsi2016MultivariateFunctions, Gatheral2009No-Dynamic-ArbitrageImpact,schneider2017}. Even though these arbitrages cannot be exploited in our single-period setup, they would emerge by generalizing our setup to the multi-period setting as shown in~\cite{schneider2017}. This is why we choose to also consider this class of arbitrages.
\begin{axiom}[Symmetry]
\label{Axiom:symmetric}
The cross-impact model $\Lambda$ takes values in the space of symmetric matrices. 
\end{axiom}
\cref{Axiom:symmetric,Axiom:positive_semidefinite} together are sufficient to guarantee absence of statistical arbitrages. Arbitrage-related axioms are of great important in applications where the presence of arbitrages leads to odd behaviour. For example, \cite{Alfonsi2016MultivariateFunctions} highlights how arbitrageable cross-impact models lead to ill-behaved optimal trading strategies.
Although outside the scope of this paper, it is interesting to assess if real markets admit some kind of statistical arbitrage, and whether these hold when factoring other transaction costs (see \cite{schneider2017}).

\subsection{Fragmentation}
While the previous axioms ruled out statistical arbitrage, another related issue is what happens when trading assets (or combination of assets) which have constant prices. For example, consider a stock traded on multiple markets (say, Apple traded on the Nasdaq and on the Bats venues). For a reasonably large interval of time $\Delta t$ (and abstracting microstructural effects), we expect $p_{\textrm{Nasdaq}} - p_{\textrm{Bats}} = 0$. Thus, buying a volume $q = q_{\textrm{Nasdaq}} + q_{\textrm{Bats}}$ of Apple stock should yield the same cost no matter how one fragments the $q_{\textrm{Nasdaq}}$ units bought on Nasdaq and the $q_{\textrm{Bats}}$ units bought on Bats. For this reason, this axiom is dubbed \textit{fragmentation invariance}. 
\\ \\
We distinguish between three different forms of fragmentation invariance. The first, \emph{weak fragmentation invariance}, concerns the price changes given by a cross-impact model and is detailed in the next Axiom.
\begin{axiom}[Weak fragmentation invariance]
  \label{Axiom:weak_fragmentation_invariance}
  A cross-impact model $\Lambda$ is \emph{weakly fragmentation invariant} if, for any $(\Sigma, \Omega, R) \in (\spd{n} \times \pd{n} \times \mat{n})$ and $\emptyset \subset V \subseteq \ker{\Sigma}$,
$$
\Pi_{V} \Lambda(\Sigma, \Omega, R) = 0 \, ,
$$
where we recall that $\Pi_V$ denotes the projector on the linear subspace $V$. 
\end{axiom}
In practice, if the price of a linear combination of assets is constant, weak fragmentation invariance guarantees that its price cannot be moved through trading.

\begin{remark}
\label{remark:kernel_sigma_r}
From now on, we will implicitly assume that $\ker(\Sigma) \subseteq \ker(R^\top)$, which is consistent with the interpretation of \,$\Sigma$ and $R$ as covariations of prices and order flows. This implies that from the point of view of the fragmentation-related axioms, any condition involving the the kernel of \,$\Sigma$ will be naturally related to the kernel of $R^\top$ as well.
\end{remark}

We obtain a stronger condition if we require volume traded in zero-volatility directions to induce no price impact. This leads to the following Axiom.

\begin{axiom}[Semi-Strong fragmentation invariance]
\label{Axiom:semi_strong_fragmentation_invariance}
A cross-impact model satisfies \emph{semi-strong fragmentation invariance} if, besides satisfying the weak fragmentation invariance \cref{Axiom:weak_fragmentation_invariance}, for any $(\Sigma, \Omega, R) \in (\spd{n} \times \pd{n} \times \mat{n})$ and $\emptyset \subset V \subseteq \ker{\Sigma}$,
$$
\Lambda(\Sigma, \Omega, R) \Pi_{V} = 0 \, .
$$
\end{axiom}

We can go one step further by ensuring that the cross-impact model itself should also not depend on how zero-volatility directions are traded by \emph{other} market members. This is \emph{strong fragmentation invariance}, the subject of the next Axiom.

\begin{axiom}[Strong fragmentation invariance]
\label{Axiom:strong_fragmentation_invariance}
A cross-impact model $\Lambda$ is \emph{strongly fragmentation invariant} if, besides satisfying semi-strong fragmentation invariance (\cref{Axiom:semi_strong_fragmentation_invariance}),  for any $(\Sigma, \Omega, R) \in (\spd{n} \times \pd{n} \times \mat{n})$ and $\emptyset \subset V \subseteq \ker(\Sigma)$,
$$
\Lambda(\Sigma, \Omega, R) = \Lambda(\bar \Pi_V \Sigma \bar \Pi_V,  \bar \Pi_{V} \Omega  \bar \Pi_{V}, \bar \Pi_V R \bar \Pi_{V}) \, .
$$
\end{axiom}

Weak fragmentation invariance (\cref{Axiom:weak_fragmentation_invariance}) is critical since it prevents models from predicting price changes for zero-volatility instruments. Furthermore, it properly aggregates liquidity of an asset traded on multiple venues. For the same reasons, semi-strong and strong fragmentation invariance (\cref{Axiom:semi_strong_fragmentation_invariance,Axiom:strong_fragmentation_invariance})  should be of crucial importance.

\subsection{Stability}
Fragmentation invariance axioms constrain cross-impact models in extreme regimes of price correlations. Similarly, stability axioms control behaviour in extremes of liquidity. Intuitively, price manipulation of liquid products using illiquid instruments should be excluded. 
\\ \\
We model this by defining a set $V$ of illiquid instruments. We consider the matrix $\bar \Pi_V + \varepsilon \Pi_V$ that multiplies by $\varepsilon \ll 1$ the liquidity of all instruments belonging to $V$. After multiplying the traded order flow by this matrix, the observables become
\begin{align*}
  \Sigmaqeps &:= \Sigma \\
  \Omegaqeps &:= (\bar \Pi_V + \varepsilon \Pi_V) \Omega (\bar \Pi_V + \varepsilon \Pi_V) \\
  \Rqeps &:= R (\bar \Pi_V + \varepsilon \Pi_V) \, .
\end{align*}
We are now ready to formulate liquidity axioms. First, trading illiquid instruments should not lead to large impact on liquid instruments. We would otherwise be able to manipulate the prices of liquid instruments. The converse should be true: we should not be able to manipulate prices of illiquid instruments by trading liquid instruments. This motivates the next axiom.

\begin{axiom}[Weak Cross-Stability]
\label{Axiom:weak_cross_stability}
A cross-impact model $\Lambda$ is \emph{weakly cross-stable} if, for any $(\Sigma, \Omega, R) \in (\spd{n} \times \pd{n} \times \mat{n})$ and linear subspace $V$ and using the above notations,
\begin{align}
\bar{\Pi}_V  \Lambda(\Sigmaqeps, \Omegaqeps, \Rqeps) \Pi_V &\underset{\varepsilon \to 0}{=} O(1) \\
\Pi_V  \Lambda(\Sigmaqeps, \Omegaqeps, \Rqeps) \bar \Pi_V &\underset{\varepsilon \to 0}{=} O(1) \, .
\end{align}
\end{axiom}

We can formulate a stronger cross-stability property. The next axiom formalizes the intuition that impact among liquid assets should be independent of the behavior of illiquid assets.
\begin{axiom}[Strong Cross-Stability]
\label{Axiom:strong_cross_stability}
A cross-impact model $\Lambda$ is \emph{strongly cross-stable} if, in addition to satisfying weak-cross stability (~\cref{Axiom:weak_cross_stability}), for any $(\Sigma, \Omega, R) \in (\spd{n} \times \pd{n} \times \mat{n})$ and linear subspace $V$ and using the above notations,
$$
\bar{\Pi}_V  \Lambda\left(\Sigmaqeps, \Omegaqeps, \Rqeps\right) \bar \Pi_V \underset{\varepsilon \to 0}{\to} \bar{\Pi}_V  \Lambda\left(\bar \Pi_V \Sigma \bar \Pi_V, \bar \Pi_V \Omega \bar \Pi_V, \bar \Pi_V R \bar \Pi_V\right) \bar \Pi_V
$$
\end{axiom}
An unresolved question is the effect of trading illiquid instruments on illiquid products. The following axiom deals with this issue.
\begin{axiom}[Self-Stability]
  \label{Axiom:self-stability}
  A cross-impact model is \emph{self-stable} if, for any $(\Sigma, \Omega, R) \in (\spd{n} \times \pd{n} \times \mat{n})$, subspace $V$ and using the above notations,
  \begin{equation}
    \Pi_V  \Lambda(\Sigmaqeps, \Omegaqeps, \Rqeps) \Pi_V \underset{\varepsilon \to 0}{=} O(1).
  \end{equation}
\end{axiom}
Intuitively we want to avoid this property since it indicates that, even though a product is illiquid ($q \propto \varepsilon$, so that one would expect a diverging impact) the predicted cost of trading such product can be finite. 
\\ \\
We believe weak cross-stability (\cref{Axiom:weak_cross_stability}) is fundamental. Indeed, it should be impossible to manipulate prices from liquid assets by trading illiquid assets and vice-versa. We would also like the stronger version of this axiom (\cref{Axiom:strong_cross_stability}) to hold: liquid instruments should be insensitive to trading on illiquid ones. On the other hand, self-stability (\cref{Axiom:self-stability}) does not penalize trading illiquid instruments. Thus, it is undesirable in applications.

\subsection{Predicted covariance}
Finally, it can be interesting to consider whether a cross-impact model predicts a contribution to the return covariance that is proportional to $\Sigma$ or not. 
\begin{axiom}[Return covariance consistency]
\label{Axiom:consistency_correlation}
A cross-impact model $\Lambda$ is \emph{return covariance consistent} if, for any $(\Sigma, \Omega, R) \in (\spd{n} \times \pd{n} \times \mat{n})$, it satisfies (up to a multiplicative constant):
\begin{equation*}
    \Sigma = \Lambda(\Sigma, \Omega, R) \Omega  \Lambda(\Sigma, \Omega, R)^\top.
  \end{equation*}
\end{axiom}
This axiom is motivated by the fact that under the model in \cref{eq:lin_impact}, we expect return covariances to be given by
$$
\Sigma = \E[\boldm{\Delta p} \boldm{\Delta p}^\top] = \Lambda \Omega \Lambda^\top + \E[\boldm{\eta} \boldm{\eta}^\top] \, ,
$$
so that if one assumes that the fundamental return covariance is proportional to the predicted one, i.e. $\E[\boldm{\eta} \boldm{\eta}^\top] \propto \Sigma$, one would recover return covariance consistency. This property controls the predicted price changes of the model, but we have no strong reason to believe cross-impact models should satisfy it.

\subsection{Link between axioms}

Fragmentation and cross-stability are related for split and rotation-invariant cross-impact models. The next proposition shows that fragmentation invariance implies cross-stability properties for continuous cross-impact models.
\begin{proposition}
\label{prop:fragm_to_liq_main}
Let $\Lambda$ be a jointly continuous cross-impact model which satisfies split and rotational invariance~(\cref{Axiom:invariance_rotation,Axiom:split_invariance}). Then
\begin{enumerate}[(i)]
    \item If $\Lambda$ satisfies semi-strong fragmentation invariance~(\cref{Axiom:semi_strong_fragmentation_invariance}), then it is weakly cross-stable (\cref{Axiom:weak_cross_stability}).
    \item If $\Lambda$ is strongly fragmentation invariant (\cref{Axiom:strong_fragmentation_invariance}), then it is strongly cross-stable (\cref{Axiom:strong_cross_stability}).
\end{enumerate}
\end{proposition}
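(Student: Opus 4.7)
The strategy is to use the combined invariance implied by \cref{Axiom:invariance_rotation} and \cref{Axiom:split_invariance}, namely that for every $M \in GL(n)$,
$$\Lambda(M \Sigma M^\top,\, M^{-\top} \Omega M^{-1},\, M R M^{-1}) = M \, \Lambda(\Sigma,\Omega,R) \, M^\top,$$
to trade a perturbation of $(\Omega, R)$ (which is what defines the cross-stability setup) for a perturbation of $(\Sigma, R)$ in which the subspace $V$ ends up sitting inside $\ker \Sigma$ at the limit $\varepsilon \to 0$; then fragmentation invariance can be invoked at the boundary. By rotational invariance I may work in an orthonormal basis where $V = \mathrm{span}(e_1,\ldots,e_k)$, so that $D_\varepsilon := \bar\Pi_V + \varepsilon \Pi_V$ is diagonal and invertible for $\varepsilon > 0$.

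Applying the combined invariance with $M = D_\varepsilon$ to the perturbed triple $(\Sigma, \Omega_{q,\varepsilon}, R_{q,\varepsilon}) = (\Sigma, D_\varepsilon \Omega D_\varepsilon, R D_\varepsilon)$ yields the key identity
$$\Lambda(\Sigma, \Omega_{q,\varepsilon}, R_{q,\varepsilon}) = D_\varepsilon^{-1}\, \tilde\Lambda_\varepsilon\, D_\varepsilon^{-1}, \qquad \tilde\Lambda_\varepsilon := \Lambda(D_\varepsilon \Sigma D_\varepsilon,\, \Omega,\, D_\varepsilon R).$$
The arguments of $\tilde\Lambda_\varepsilon$ are polynomial in $\varepsilon$ and remain in the domain of $\Lambda$ for every $\varepsilon \ge 0$ (since $\Omega \in \pd{n}$ is untouched). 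By joint continuity, $\tilde\Lambda_\varepsilon \to \tilde\Lambda_0 := \Lambda(\bar\Pi_V \Sigma \bar\Pi_V, \Omega, \bar\Pi_V R)$, and the limiting triple satisfies $V \subseteq \ker(\bar\Pi_V \Sigma \bar\Pi_V)$ and, by \cref{remark:kernel_sigma_r} applied to the original $\Sigma, R$, also $V \subseteq \ker((\bar\Pi_V R)^\top)$.

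For (i), semi-strong fragmentation invariance (\cref{Axiom:semi_strong_fragmentation_invariance}) applied at $\varepsilon = 0$ gives $\Pi_V \tilde\Lambda_0 = 0$ and $\tilde\Lambda_0 \Pi_V = 0$. Writing $\tilde\Lambda_\varepsilon$ in block form along the decomposition $\mathbb{R}^n = V \oplus V^\perp$, conjugation by $D_\varepsilon^{-1}$ multiplies the $(V,V)$ block by $\varepsilon^{-2}$, the two off-diagonal blocks by $\varepsilon^{-1}$, and leaves the $(V^\perp, V^\perp)$ block invariant. Hence $\bar\Pi_V \Lambda(\Sigma, \Omega_{q,\varepsilon}, R_{q,\varepsilon}) \Pi_V = \varepsilon^{-1} \bar\Pi_V \tilde\Lambda_\varepsilon \Pi_V$, and similarly for the transpose. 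The fragmentation identity at the boundary, combined with the polynomial dependence of the arguments on $\varepsilon$, forces these off-diagonal blocks of $\tilde\Lambda_\varepsilon$ to vanish at rate $O(\varepsilon)$, which is precisely what is needed for the final off-diagonal blocks to be $O(1)$.

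For (ii), the $(V^\perp, V^\perp)$ block reads $\bar\Pi_V \Lambda(\Sigma, \Omega_{q,\varepsilon}, R_{q,\varepsilon}) \bar\Pi_V = \bar\Pi_V \tilde\Lambda_\varepsilon \bar\Pi_V$, which converges by joint continuity to $\bar\Pi_V \tilde\Lambda_0 \bar\Pi_V$. Strong fragmentation invariance (\cref{Axiom:strong_fragmentation_invariance}), which applies since $V \subseteq \ker(\bar\Pi_V \Sigma \bar\Pi_V)$, allows me to rewrite $\tilde\Lambda_0 = \Lambda(\bar\Pi_V \Sigma \bar\Pi_V, \Omega, \bar\Pi_V R) = \Lambda(\bar\Pi_V \Sigma \bar\Pi_V, \bar\Pi_V \Omega \bar\Pi_V, \bar\Pi_V R \bar\Pi_V)$, giving exactly the right-hand side of strong cross-stability. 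The hardest part of the argument is the $O(\varepsilon)$ rate in step (i): bare joint continuity only yields $o(1)$ for the off-diagonal blocks of $\tilde\Lambda_\varepsilon$, so closing the bound requires either exploiting the combined invariance a second time with an intermediate diagonal scaling (to factor out an extra power of $\varepsilon^{1/2}$ on each side of the cross blocks) or invoking the polynomial-in-$\varepsilon$ structure of the arguments together with a local Lipschitz property of $\Lambda$ at the boundary of its domain.
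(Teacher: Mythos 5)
Your key identity $\Lambda(\Sigmaqeps,\Omegaqeps,\Rqeps)=D_\varepsilon^{-1}\,\tilde\Lambda_\varepsilon\,D_\varepsilon^{-1}$ with $D_\varepsilon:=\bar\Pi_V+\varepsilon\Pi_V$ and $\tilde\Lambda_\varepsilon=\Lambda(D_\varepsilon\Sigma D_\varepsilon,\Omega,D_\varepsilon R)$ is exactly the paper's \cref{lemma:expansion_fragmentation_liquidity} (your $\tilde\Lambda_\varepsilon$ is $\Lambda(\Sigmapeps,\Omegapeps,\Rpeps)$), derived the same way, and your treatment of part (ii) — the $(V^\perp,V^\perp)$ block is untouched by the conjugation, converges by joint continuity to $\bar\Pi_V\tilde\Lambda_0\bar\Pi_V$, and is rewritten via strong fragmentation invariance — coincides with the paper's argument. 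That half is complete.

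Part (i) is not, and the gap is exactly where you place it. From the identity, $\bar\Pi_V\Lambda(\Sigmaqeps,\Omegaqeps,\Rqeps)\Pi_V=\varepsilon^{-1}\bar\Pi_V\tilde\Lambda_\varepsilon\Pi_V$, and joint continuity together with semi-strong fragmentation invariance at $\varepsilon=0$ yield only $\bar\Pi_V\tilde\Lambda_\varepsilon\Pi_V=o(1)$; an $\varepsilon^{-1}\,o(1)$ bound is not $O(1)$. Your closing sentence concedes this and sketches two possible fixes (an intermediate $\varepsilon^{1/2}$ scaling, or a local Lipschitz modulus), but neither is carried out and neither follows from the stated hypotheses. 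The paper closes the step with a second expansion, \cref{lemma:expansion_liquidity_fragmentation}, which as stated gives $\bar\Pi_V\Lambda(\Sigmapeps,\Omegapeps,\Rpeps)\Pi_V=\varepsilon^{-1}\bar\Pi_V\Lambda(\Sigmaqeps,\Omegaqeps,\Rqeps)\Pi_V$, so that the $o(1)$ left-hand side forces the right-hand cross block to be $o(\varepsilon)$, hence $O(1)$. Before adopting that lemma you should check it against the first: read literally the two lemmas are mutual inverses carrying the same negative powers of $\varepsilon$, which together would force every cross block to vanish identically; the relation that actually follows from split and rotational invariance is $\Lambda(\Sigmapeps,\Omegapeps,\Rpeps)=D_\varepsilon\Lambda(\Sigmaqeps,\Omegaqeps,\Rqeps)D_\varepsilon$, i.e.\ with positive powers, which returns you to the $\varepsilon^{-1}\,o(1)$ impasse. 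So your instinct that bare joint continuity is insufficient and that a quantitative rate (e.g.\ local Lipschitz continuity of $\Lambda$ at the boundary triple, which does give $\tilde\Lambda_\varepsilon-\tilde\Lambda_0=O(\varepsilon)$ since the arguments are affine in $\varepsilon$) is genuinely needed is sound; as written, part (i) of your proof is incomplete.
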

We prove \cref{prop:fragm_to_liq_main} in \cref{sec:fragm_liq}. While the converse is not true, the next proposition shows that, given an additional regularity condition, cross-stability implies fragmentation invariance.
\begin{proposition}
\label{prop:liq_to_frag_main}
Let $\Lambda$ be a jointly continuous cross-impact model which satisfies split and rotational invariance~(\cref{Axiom:invariance_rotation,Axiom:split_invariance}). We further assume that, for any linear subspace $V$ and using the notations of the previous section, $\varepsilon^{2} \Lambda(\Sigmaqeps, \Omegaqeps, \Rqeps)  \underset{\varepsilon \to 0}{\to} 0$. Then
\begin{enumerate}[(i)]
    \item If $\Lambda$ is weakly cross-stable (\cref{Axiom:weak_cross_stability}), then it satisfies semi-strong fragmentation invariance~(\cref{Axiom:semi_strong_fragmentation_invariance}).
    \item If $\Lambda$ is strongly cross-stable (\cref{Axiom:strong_cross_stability}), then it is strongly fragmentation invariant (\cref{Axiom:strong_fragmentation_invariance}). 
\end{enumerate}
\end{proposition}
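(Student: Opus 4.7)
The plan is to convert the asymptotic block bounds supplied by the cross-stability axioms into exact algebraic constraints on $\Lambda(\Sigma,\Omega,R)$ via an explicit conjugation identity coming from split and rotational invariance. First, using rotational invariance I would change basis so that $V$ is spanned by coordinate vectors; then $D_\varepsilon := \bar\Pi_V + \varepsilon \Pi_V$ is diagonal with strictly positive entries and split invariance applies. A direct parameter-matching check (apply split invariance with $D = D_\varepsilon$ at the triple $(D_\varepsilon\Sigma D_\varepsilon,\Omega, D_\varepsilon R)$) yields
\begin{equation*}
\Lambda(\Sigma, \Omegaqeps, \Rqeps) \;=\; D_\varepsilon^{-1}\, \Lambda(D_\varepsilon \Sigma D_\varepsilon,\, \Omega,\, D_\varepsilon R)\, D_\varepsilon^{-1}.
\end{equation*}
For $V \subseteq \ker \Sigma$ one has $\Sigma \Pi_V = 0$, and by Remark~\ref{remark:kernel_sigma_r} also $\Pi_V R = 0$, so $D_\varepsilon \Sigma D_\varepsilon = \Sigma$ and $D_\varepsilon R = R$ hold identically in $\varepsilon$. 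Writing $\Lambda_0 := \Lambda(\Sigma,\Omega,R)$, this collapses to the key identity
\begin{equation*}
\Lambda(\Sigma, \Omegaqeps, \Rqeps) \;=\; D_\varepsilon^{-1}\, \Lambda_0\, D_\varepsilon^{-1}\qquad\text{for every } \varepsilon > 0.
\end{equation*}

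For part~(i), I would decompose this identity block-wise with respect to $V \oplus V^\perp$ via $D_\varepsilon^{-1} = \bar\Pi_V + \varepsilon^{-1}\Pi_V$. The two off-diagonal blocks of the left-hand side equal $\varepsilon^{-1}\bar\Pi_V \Lambda_0 \Pi_V$ and $\varepsilon^{-1}\Pi_V\Lambda_0 \bar\Pi_V$; weak cross-stability forces both to be $O(1)$ as $\varepsilon \to 0$, and since their $\varepsilon$-independent factors are multiplied by a diverging prefactor, each of $\bar\Pi_V \Lambda_0 \Pi_V$ and $\Pi_V \Lambda_0 \bar\Pi_V$ must vanish. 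The $V$-$V$ block of the left-hand side equals $\varepsilon^{-2}\Pi_V \Lambda_0 \Pi_V$; multiplying by $\varepsilon^2$, the regularity hypothesis gives $\Pi_V \Lambda_0 \Pi_V \to 0$, and by $\varepsilon$-independence this block is identically zero. Together these give $\Lambda_0 = \bar\Pi_V \Lambda_0 \bar\Pi_V$, which is semi-strong fragmentation invariance.

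For part~(ii), reading the $V^\perp$-$V^\perp$ block of the key identity (on which $D_\varepsilon^{-1}$ acts as the identity) shows that $\bar\Pi_V \Lambda(\Sigma, \Omegaqeps, \Rqeps)\bar\Pi_V = \bar\Pi_V \Lambda_0 \bar\Pi_V$, which by part~(i) is just $\Lambda_0$. Strong cross-stability therefore gives, using $\bar\Pi_V \Sigma \bar\Pi_V = \Sigma$ and $\bar\Pi_V R \bar\Pi_V = R$,
\begin{equation*}
\Lambda_0 \;=\; \bar\Pi_V\, \Lambda(\Sigma,\, \bar\Pi_V \Omega \bar\Pi_V,\, R)\, \bar\Pi_V.
\end{equation*}
To strip the outer projectors I would re-apply part~(i) at the triple $(\Sigma, \bar\Pi_V \Omega \bar\Pi_V, R)$, for which $V \subseteq \ker \Sigma$ still holds, obtaining $\Lambda(\Sigma, \bar\Pi_V \Omega \bar\Pi_V, R) = \bar\Pi_V \Lambda(\Sigma, \bar\Pi_V \Omega \bar\Pi_V, R) \bar\Pi_V$. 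Combining with the previous display then yields $\Lambda_0 = \Lambda(\bar\Pi_V\Sigma\bar\Pi_V, \bar\Pi_V\Omega\bar\Pi_V, \bar\Pi_V R\bar\Pi_V)$, i.e.\ strong fragmentation invariance.

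The main obstacle I foresee is that $\bar\Pi_V \Omega \bar\Pi_V$ lies on the boundary of $\pd{n}$ rather than in its interior, so part~(i) does not literally apply in the last step of (ii). I would remedy this by a mild regularization: set $\Omega_\delta := \bar\Pi_V \Omega \bar\Pi_V + \delta \Pi_V \in \pd{n}$, apply part~(i) to $(\Sigma, \Omega_\delta, R)$, and pass to $\delta \to 0^+$ using joint continuity of $\Lambda$. Beyond this boundary technicality, the entire argument reduces to careful $\varepsilon$-power bookkeeping on the explicit conjugation identity above.
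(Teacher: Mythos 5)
Your proof is correct and follows essentially the same route as the paper: your ``key identity'' is exactly \cref{lemma:expansion_fragmentation_liquidity} written in product form, specialized to the case $V \subseteq \ker \Sigma$ where the price-regularised triple collapses onto the original one, and the rest is the same $\varepsilon$-power bookkeeping (your version of (i) is in fact slightly cleaner, since the blocks $\varepsilon^{-1}\bar\Pi_V \Lambda_0 \Pi_V$, etc.\ are $\varepsilon$-independent up to the prefactor, so exact vanishing follows without the continuity/limit argument the paper uses, and your $\delta$-regularization of $\bar\Pi_V\Omega\bar\Pi_V$ in (ii) is more careful than the paper, which evaluates $\Lambda$ at that singular point without comment). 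One small correction: \cref{remark:kernel_sigma_r} gives $\Pi_V R = 0$, hence $\bar\Pi_V R = R$ and $D_\varepsilon R = R$, but it does \emph{not} give $R\Pi_V = 0$, so $\bar\Pi_V R \bar\Pi_V = R\bar\Pi_V \neq R$ in general; this slip is harmless because the target of \cref{Axiom:strong_fragmentation_invariance} is stated with $\bar\Pi_V R \bar\Pi_V$ anyway, and your argument for (ii) goes through verbatim with $R\bar\Pi_V$ in the third slot.
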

We prove \cref{prop:liq_to_frag_main} in \cref{sec:fragm_liq}. A particularly interesting result of \cref{prop:liq_to_frag_main,prop:fragm_to_liq_main} is that for continuous cross-impact models which satisfy the regularity property of \cref{prop:liq_to_frag_main}, fragmentation invariance and cross-stability are equivalent.

\section{Candidate cross-impact models}
\label{sec:candidate_cross_impact_models}

Now that we have characterized the desirable properties of cross-impact models, we provide a set of cross-impact models and detail which axioms they satisfy. Their empirical performance will be assessed in \cref{sec:illustrative_examples_empirical}. We divide these models in two classes; those that are based on the return covariance $\Sigma$ and those based on the response $R$.
\\ \\
Before presenting the different cross-impact models, we introduce some notation. For convenience, we will note the price volatility $\boldm{\sigma} := (\sqrt{\Sigma_{11}}, \cdots, \sqrt{\Sigma_{\nassets \nassets}})$, the signed order flow volatility $\boldm{\omega} := (\sqrt{\Omega_{11}}, \cdots, \sqrt{\Omega_{\nassets \nassets}})$, and the price and flow correlations $\rho := \diag{\boldm{\sigma}}^{-1} \Sigma \diag{\boldm{\sigma}}^{-1}$, $\rho_{\Omega} := \diag{\boldm{\omega}}^{-1} \Omega \diag{\boldm{\omega}}^{-1}$.

\subsection{Return covariance based models}
\label{sec:price-covar-based}

Let us start with the simplest possible linear impact model:  one without cross-impact.
\begin{Definition}[\texttt{direct} model]
The \emph{\texttt{direct}} model is defined for any $(\Sigma, \Omega, R) \in (\spd{n} \times \pd{n} \times \mat{n})$ as
\begin{equation}\label{eq:pdir}
    \Lambda_{\emph{\pdir}}(\Sigma, \Omega, R) := \diag{\boldm{\sigma}}^{1/2} \diag{\boldm{\omega}}^{-1/2}.
\end{equation}
\end{Definition}
 To  generalize this model to the multivariate setting while respecting cash invariance, weak fragmentation invariance and consistency with correlations, a first idea is to use the matrices $\Sigma^{1/2}$ and $\Omega^{-1/2}$. Since $\Omega^{-1/2} \boldm{q}$ is a whitening transformation, this model is referred to as the \texttt{whitening} model.
\begin{Definition}[\texttt{whitening} model]
Recall that given $M \in \spd{n}$, $M^{1/2}$ indicates a symmetric matrix factorization (i.e., $M^{1/2} (M^{1/2})^\top = \id$). The \emph{\texttt{whitening}} model\footnote{The \texttt{whitening} model is not independent of the symmetric factorization chosen for $\Sigma$ and $\Omega$. As convention, we will take the square root obtained by an orthogonal decomposition of each matrix and the square root of their eigenvalues.} is defined, for any $(\Sigma, \Omega, R) \in (\spd{n} \times \pd{n} \times \mat{n})$, as
\begin{equation}
    \label{eq:pwhi}
    \Lambda_{\emph{\pwhi}}(\Sigma, \Omega, R) := \Sigma^{1/2} \Omega^{-1/2}.
\end{equation}
\end{Definition}
Unfortunately, this model does not respect symmetry, positive-definiteness, strong fragmentation invariance or weak cross-stability (\cref{Axiom:positive_semidefinite,Axiom:symmetric,Axiom:strong_fragmentation_invariance,Axiom:weak_cross_stability}). To impose symmetry and strong fragmentation invariance, the  \texttt{el} model\footnote{The model proposed in~\cite{MBE2017} is actually the response-based one, referred later as \texttt{r-el$\star$} model.} proposed in~\cite{MBE2017} is directly expressed in the basis of the return covariance matrix.

\begin{Definition}[\pelm{} model]
The eigenliquidity (\emph{\pelm{}}) model is defined, for any $(\Sigma, \Omega, R) \in (\spd{n} \times \pd{n} \times \mat{n})$, as
\begin{equation}
\Lambda_{\emph{\pelm{}}}(\Sigma, \Omega, R) := \sum_{a=1}^n \boldm{s}_a \dfrac{\sqrt{\lambda_a}}{(\boldm{s}_a^\top \Omega \boldm{s}_a)^{1/2}} \boldm{s}_a^\top,
\end{equation}
where we have introduced the eigenvalue decomposition of $\Sigma = \sum_{a=1}^n \boldm{s}_a \lambda_a \boldm{s}_a^\top$.
\end{Definition}
The \pelm{} model is cross-stable, self-stable (\cref{Axiom:strong_cross_stability,Axiom:weak_cross_stability,Axiom:self-stability}) and is return covariance inconsistent (\cref{Axiom:consistency_correlation}). As mentioned above, there is in fact only one model which satisfies all the axioms that we have provided: the so-called multivariate \pkyle{} model, see~\cite{delMolino2018TheDifferent,Caballe1994ImperfectNeutrality}.

\begin{Definition}[\pkyle{} model]
The \emph{\pkyle{}} model is defined, for any $(\Sigma, \Omega, R) \in (\spd{n} \times \pd{n} \times \mat{n})$, as
  \begin{equation}
    \Lambda_{\emph{\pkyle{}}}(\Sigma, \Omega, R) := (\Omega^{-1/2})^\top \sqrt{(\Omega^{1/2})^\top \Sigma \Omega^{1/2}} \Omega^{-1/2}.
  \end{equation}

\end{Definition}

The \pkyle{} model plays a fundamental role as it is the only model which satisfies all axioms. This may explain why it appears in many different settings, seemingly unrelated to the Kyle insider trading setup \cite{Gueant2017OptimalMaking, Evangelista2018NewMaking,rosenbaum2021characterisation}. The next proposition shows it is the only model which satisfies arbitrage axioms and return covariance consistency.

\begin{proposition}
\label{prop:kyle_arb_cov_unique_main}
Let $\Lambda$ be a symmetric, positive-semidefinite and return covariance consistent cross-impact model (\cref{Axiom:symmetric,Axiom:positive_semidefinite,Axiom:consistency_correlation}).
Then $\Lambda = \Lambda_{\emph{\pkyle}}$ up to a multiplicative constant.
\end{proposition}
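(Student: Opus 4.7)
The plan is to combine the three hypotheses into a matrix equation for $\Lambda$, reduce it via conjugation to an equation of the form $Y^{2} = M$ with $Y$ symmetric positive semi-definite, and then invoke the uniqueness of the positive semi-definite square root.

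Concretely, by \cref{Axiom:consistency_correlation} there exists a constant $c>0$ with $\Sigma = c\,\Lambda\,\Omega\,\Lambda^{\top}$, and by \cref{Axiom:symmetric} we have $\Lambda = \Lambda^{\top}$, so the equation simplifies to
\begin{equation*}
\Lambda\,\Omega\,\Lambda \;=\; \tfrac{1}{c}\,\Sigma.
\end{equation*}
I will fix $\Omega^{1/2} := \sqrt{\Omega}$, the unique symmetric positive definite square root of $\Omega \in \pd{n}$ (which exists because $\Omega$ is positive definite). Multiplying the displayed equation on both sides by $\Omega^{1/2}$ yields
\begin{equation*}
\bigl(\Omega^{1/2}\,\Lambda\,\Omega^{1/2}\bigr)^{2} \;=\; \tfrac{1}{c}\,\Omega^{1/2}\,\Sigma\,\Omega^{1/2}.
\end{equation*}

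Now I would observe that $Y := \Omega^{1/2}\Lambda\,\Omega^{1/2}$ is symmetric (since $\Omega^{1/2}$ and $\Lambda$ are) and positive semi-definite (by \cref{Axiom:positive_semidefinite} and the fact that congruence by an invertible matrix preserves positive semi-definiteness). The right-hand side $\tfrac{1}{c}\,\Omega^{1/2}\,\Sigma\,\Omega^{1/2}$ is also symmetric positive semi-definite. Since the positive semi-definite square root of a symmetric PSD matrix is unique, one necessarily has
\begin{equation*}
Y \;=\; \tfrac{1}{\sqrt{c}}\,\sqrt{\Omega^{1/2}\,\Sigma\,\Omega^{1/2}}.
\end{equation*}
Inverting the conjugation by $\Omega^{1/2}$ then gives
\begin{equation*}
\Lambda \;=\; \tfrac{1}{\sqrt{c}}\,\Omega^{-1/2}\,\sqrt{\Omega^{1/2}\,\Sigma\,\Omega^{1/2}}\,\Omega^{-1/2} \;=\; \tfrac{1}{\sqrt{c}}\,\Lambda_{\pkyle}(\Sigma,\Omega,R),
\end{equation*}
which is the desired conclusion.

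The only delicate step is the reduction to a PSD square-root problem: one must make sure to conjugate $\Lambda\,\Omega\,\Lambda$ by a symmetric square root of $\Omega$ so that $Y = \Omega^{1/2}\Lambda\,\Omega^{1/2}$ is itself symmetric PSD; otherwise one would obtain an equation of the form $Z^{\top}Z = M$, whose solutions are unique only up to an orthogonal factor on the left. Using the symmetric PSD square root of $\Omega$ avoids this ambiguity, and the uniqueness of the PSD square root then closes the argument. A minor point to verify is consistency with the definition of $\Lambda_{\pkyle}$ in the paper: although that definition is written with a generic factorization $\Omega^{1/2}(\Omega^{1/2})^{\top}=\Omega$, the resulting matrix is invariant under the choice of factorization, so picking the symmetric PSD square root is legitimate.
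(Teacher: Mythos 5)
Your proof is correct and follows essentially the same route as the paper's: both reduce the covariance-consistency equation to a positive semi-definite square-root problem by conjugating with a square root of $\Omega$, and conclude by uniqueness of the PSD square root. The only real difference is that the paper first parametrizes the solutions of $\Sigma = \Lambda\Omega\Lambda^\top$ as $\Gc O \Lc^{-1}$ via the rotation ambiguity of square-root factorizations and then pins down $O$ using symmetry, whereas you square the conjugated matrix $\Omega^{1/2}\Lambda\Omega^{1/2}$ directly; your version is slightly more streamlined and makes explicit the use of positive semi-definiteness in selecting the PSD root, a point the paper leaves implicit when it invokes uniqueness of "the symmetric square root."
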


The proof of \cref{prop:kyle_arb_cov_unique_main} is given in \cref{sec:connections}. The next proposition further shows that the \pkyle{} model is also the only return covariance based model which satisfies all symmetry axioms.

\begin{proposition}
\label{prop:kyle_sym_unique_main}
A return covariance based cross-impact model $\Lambda$ that is both split-invariant and rotation-invariant (\cref{Axiom:invariance_rotation,Axiom:split_invariance}) can always be written in the form
	$$
	\Lambda(\Sigma, \Omega) = \Lc^{-\top} U F(\bmu) U^\top \Lc^{-1},
	$$
	where
		\begin{align*}
	    \Omega = \Lc \Lc^\top \quad ; \quad 
	    \hat \Sigma := \Lc^{\top} \Sigma \Lc \quad ; \quad 
	    U^\top \hat{\Sigma} U := \diag{\bmu} \quad ; \quad 
	    F(\bmu) := \Lambda(\diag{\bmu}, \id).
	\end{align*}
Furthermore, if $\Lambda$ is cash-invariant and direct-invariant (\cref{Axiom:cash_invariance,Axiom:direct_invariance}), then $F(\boldm{\mu}) = \diag{\boldm{\mu}}^{1/2}$ up to a multiplicative constant and $\Lambda = \Lambda_{\emph{\pkyle}}$ up to a multiplicative constant.
\end{proposition}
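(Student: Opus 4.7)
The plan is to exploit rotational and split invariance jointly to reduce the general case to a canonical diagonal form, then pin down $F$ using direct and cash invariance. The main difficulty will be ruling out ``off-diagonal pollution'' in $F$ that would allow $F(\bmu)_{ii}$ to depend on $\mu_j$ for $j \neq i$.

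As noted just after \cref{Axiom:invariance_rotation}, combining rotational and split invariance lifts to full invariance under any invertible $M \in \mat{n}$: writing $M = O_1 D O_2$ via SVD and applying rotational invariance with $O_2$, split invariance with $D$, and rotational invariance with $O_1$ in succession, one obtains
\[
\Lambda(M^{-\top} \Sigma M^{-1}, M \Omega M^\top) = M^{-\top} \Lambda(\Sigma, \Omega) M^{-1}.
\]
Taking $M = \Lc^{-1}$ reduces the second argument to $\id$, giving $\Lambda(\Sigma, \Omega) = \Lc^{-\top} \Lambda(\hat \Sigma, \id) \Lc^{-1}$. A further application of rotational invariance with $O = U^\top$ then diagonalizes $\hat\Sigma$ and produces $\Lambda(\hat\Sigma, \id) = U F(\bmu) U^\top$. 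This establishes the first claim.

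For the second claim, direct invariance decomposes
\[
F(\bmu) = \sum_{i=1}^n \Lambda(\mu_i \be_i \be_i^\top, \be_i \be_i^\top).
\]
Each summand is constrained by rotational invariance under the stabilizer of $\be_i$, which acts as $\orth{n-1}$ on the orthogonal complement of $\be_i$: since both arguments are invariant under this stabilizer, so is the value, and by Schur's lemma the matrix must take the form $a(\mu_i) \be_i \be_i^\top + b(\mu_i)(\id - \be_i \be_i^\top)$, with $a, b$ independent of $i$ by permutation invariance. Cash invariance then enforces $a(\mu) = c_a \sqrt{\mu}$ and $b(\mu) = c_b \sqrt{\mu}$ for some constants $c_a, c_b$.

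The crux is showing $c_b = 0$. Choose $D = \diag{(1, d, 1, \dots, 1)}$ with $d \neq 1$: both $\mu_1 \be_1 \be_1^\top$ and $\be_1 \be_1^\top$ are fixed by the split transformation (since $D \be_1 = \be_1$), so split invariance forces
\[
\Lambda(\mu_1 \be_1 \be_1^\top, \be_1 \be_1^\top) = D^{-1} \Lambda(\mu_1 \be_1 \be_1^\top, \be_1 \be_1^\top) D^{-1},
\]
whose $(\be_2, \be_2)$-entry gives $b(\mu_1) = d^{-2} b(\mu_1)$, forcing $b \equiv 0$. Substituting $F(\bmu) = c_a \diag{\bmu}^{1/2}$ back into the formula from the first part yields
\[
\Lambda(\Sigma, \Omega) = c_a \Lc^{-\top} U \diag{\bmu}^{1/2} U^\top \Lc^{-1} = c_a \Lc^{-\top} \sqrt{\hat\Sigma} \Lc^{-1} = c_a \Lambda_{\pkyle{}}(\Sigma, \Omega),
\]
as required. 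One technical subtlety is that direct invariance is invoked on rank-one arguments lying on the boundary of $\spd{n} \times \pd{n}$; the standard remedy is to extend $\Lambda$ continuously to this closure, as the axiom implicitly requires.
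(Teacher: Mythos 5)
Your proof is correct and follows essentially the same route as the paper: combine rotational and split invariance into invariance under arbitrary invertible matrices (the paper's \cref{lemma:change_of_basis_diagonal_omega}), reduce to $\Lambda(\diag{\bmu},\id)$ via $M=\Lc^{-1}$ and a diagonalizing rotation, then apply direct and cash invariance. Your stabilizer/Schur argument plus the split-invariance step killing the $(\id-\be_i\be_i^\top)$ component supplies a justification the paper leaves implicit — namely why each $\Lambda(\be_i\be_i^\top,\be_i\be_i^\top)$ must be proportional to $\be_i\be_i^\top$ with an $i$-independent constant — so your write-up is, if anything, more complete on that point.
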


The proof of \cref{prop:kyle_sym_unique_main} is given in \cref{sec:connections}. 

\subsection{Response based models}
\label{sec:response-based}
All the models presented above assume that it is possible to relate the effect of the order flow imbalance solely with the return  and order flow covariances. This section examines models which also use the response matrix $R$. First, we can define a response-based direct impact model similar to \cref{eq:pdir}.
\begin{Definition}[\rdir{} model]
The response direct (\emph{\rdir{}}) model is defined, for any $(\Sigma, \Omega, R) \in (\spd{n} \times \pd{n} \times \mat{n})$, as
\begin{equation*}
    \Lambda_{\emph{\rdir}}(\Sigma, \Omega, R) := \diag{R_{11}, \cdots, R_{\nassets \nassets}} \diag{\boldm{\omega}}^{-1} \,.
\end{equation*}
\end{Definition}
This model corresponds to the maximum likelihood estimator of the cross-impact matrix $\Lambda$ under the constraint $\Lambda_{ij} = 0$ for $i \neq j$. Removing this constraint, one obtains the multiavariate maximum likelihood estimator defined below.
\begin{Definition}[\rmle{} model]
The maximum likelihood (\emph{\rmle{}}) model is defined, for any  $(\Sigma, \Omega, R)$ in $ (\spd{n} \times \pd{n} \times \mat{n})$, as
$$
\Lambda_{\emph{\rmle{}}}(\Sigma, \Omega, R) := R \Omega^{-1}.
$$
\end{Definition}
The \rmle{} does not satisfy desirable arbitrage or liquidity axioms. Thus, for similar reasons the \pelm{} was introduced, we introduce a \relm{} model, so to have a response-based model satisfying more axioms while coinciding with the \rmle{} model when $R$ and $\Omega$ commute.
\begin{Definition}[\relm{} model]
The response-based eigenliquidity (\emph{\relm{}}) model is defined, for any  $(\Sigma, \Omega, R)$ in $ (\spd{n} \times \pd{n} \times \mat{n})$, as
\begin{equation}
  \label{eq:relm}
  \Lambda_{\emph{\relm{}}}(\Sigma, \Omega, R) := \sum_a \boldm{s}_a \dfrac{\boldm{s}_a^\top R \boldm{s_a}}{\boldm{s}_a^\top \Omega \boldm{s}_a} \boldm{s}_a^\top \, ,  
\end{equation}
where $\boldm{s_a}$ are the eigenvectors of $\Sigma$.
\end{Definition}
Finally, we can replicate the construction of the \pkyle{} estimator in a response-based context to obtain the following model.
\begin{Definition}[\rkyle{} model]
The response-based Kyle (\emph{\rkyle{}}) model is defined, for any  $(\Sigma, \Omega, R)$ in $ (\spd{n} \times \pd{n} \times \mat{n})$, as
  \begin{equation}
    \label{eq:rkyle}
    \Lambda_{\emph{\rkyle{}}}(\Sigma, \Omega, R) := (\Omega^{-1/2})^\top \sqrt{(\Omega^{1/2})^\top R \Omega^{-1} R^\top \Omega^{1/2}} \Omega^{-1/2}.
  \end{equation}
\end{Definition}

\subsection{The $\star$ transformation}
\label{sec:-transformation}

Some of the models defined in the previous section (\pwhi{}, \pelm{}, \relm{}) violate split invariance even though they are well-behaved under rotation. We can trade one for the other through the following transformation.

\begin{Definition}[The $\star$ transformation]
Given a cross-impact model $\Lambda$, the starred version of $\Lambda$, written $\Lambda^{\star}$, is a cross-impact model defined for any  $(\Sigma, \Omega, R)$ in $ (\spd{n} \times \pd{n} \times \mat{n})$ as
$$
\Lambda^{\star}(\Sigma, \Omega, R) := \diag{\boldm{\sigma}} \Lambda ( \rho, \Omega^{\star} , R^{\star} )\diag{\boldm{\sigma}} \, ,
$$
where we have defined  $\Omega^{\star} =  \diag{\boldm{\sigma}} \Omega  \diag{\boldm{\sigma}}$ and $R^{\star} =  \diag{\boldm{\sigma}}^{-1} R  \diag{\boldm{\sigma}}$.
\end{Definition}
In practice, the starred version of a cross-impact model applies the original cross-impact model after rescaling all the observables in units of risk via a multiplication by the volatility $\boldm{\sigma}$. Naturally, this transformation has no effect on models that satisfy split invariance.

\subsection{Axioms satisfied by each model}

\cref{table:axioms_models} summarises the axioms satisfied by each model. Most results are straightforward and omitted for conciseness. We include some of the slightly less trivial proofs of the axioms satisfied by the \pkyle{} model in \cref{proof:kyle}. We summarise some of the connections between Axioms in \cref{table:axioms_relations}.

\begin{table}[H]
    \centering
    \begin{adjustwidth}{-1cm}{}
    \small
    \begin{tabular}{lcccccccccccccccccc} 
    \toprule
    \multicolumn{1}{c}{Model} & \multicolumn{5}{c}{Symmetries} & \phantom{a} & \multicolumn{2}{c}{Arbitrage} & \phantom{a} & \multicolumn{3}{c}{Fragmentation} & \phantom{a} & \multicolumn{3}{c}{Liquidity} & \phantom{a} & \multicolumn{1}{c}{Covariances} \\
    \cmidrule{2-6} \cmidrule{8-9} \cmidrule{11-13} \cmidrule{15-17} \cmidrule{19-19} & PI & DI & CI & SI & RI && SA & DA && WFI & SSFI & SFI && WCS & SCS & SS && PCC \\
    \texttt{direct}              & \gcmark & \gcmark & \gcmark &\gcmark & \yxmark && \gcmark & \gcmark && \rxmark & \rxmark & \rxmark && \gcmark & \gcmark & \gxmark && \yxmark \\
    \texttt{whitening}           & \gcmark & \gcmark & \gcmark & \rxmark & \gcmark && \rxmark & \rxmark && \gcmark & \rxmark & \rxmark && \rxmark & \rxmark & \gxmark && \gcmark \\
    \texttt{whitening$\star$}    & \gcmark   & \gcmark & \gcmark & \gcmark & \yxmark && \rxmark & \rxmark && \gcmark & \rxmark & \rxmark && \rxmark & \rxmark & \gxmark && \gcmark \\
    \texttt{el}      & \gcmark & \gcmark & \gcmark & \rxmark & \gcmark && \gcmark & \gcmark && \gcmark & \gcmark & \gcmark && \gcmark & \gcmark & \ycmark && \yxmark \\
    \texttt{el$\star$}    & \gcmark & \gcmark & \gcmark & \gcmark & \yxmark && \gcmark & \gcmark && \gcmark & \gcmark & \gcmark && \gcmark & \gcmark & \ycmark && \yxmark \\ 
    \texttt{kyle}             & \gcmark & \gcmark & \gcmark & \gcmark & \gcmark && \gcmark & \gcmark && \gcmark & \gcmark & \gcmark && \gcmark & \gcmark & \gxmark && \gcmark \\ 
    \texttt{r-direct}            &  \gcmark & \gcmark & \gcmark & \gcmark & \yxmark && \gcmark & \rxmark && \rxmark & \rxmark & \rxmark && \gcmark & \gcmark & \gxmark && \yxmark \\ 
    \texttt{ml}  & \gcmark & \gcmark & \gcmark & \gcmark & \gcmark && \rxmark & \rxmark && \gcmark & \rxmark & \rxmark && \rxmark & \rxmark & \gxmark && \yxmark  \\ 
    \texttt{r-el}    & \gcmark  & \gcmark & \gcmark & \rxmark & \gcmark && \rxmark & \gcmark && \gcmark & \gcmark & \gcmark && \gcmark & \gcmark & \ycmark && \yxmark  \\
    \texttt{r-el$\star$}   & \gcmark  & \gcmark &  \gcmark & \gcmark & \yxmark && \rxmark & \gcmark && \gcmark & \gcmark & \gcmark &&  \gcmark &\gcmark & \ycmark && \yxmark  \\ 
    \texttt{r-kyle}             & \gcmark  & \gcmark & \gcmark & \gcmark & \gcmark && \gcmark & \gcmark && \gcmark & \gcmark & \gcmark && \gcmark & \gcmark & \gxmark && \yxmark \\ 
    \bottomrule
    \end{tabular}
    \end{adjustwidth}
    \caption{\textbf{Summary of axioms satisfied by different cross-impact model.} \\
    We use the symbol \cmark \, for axioms that are satisfied and \xmark \, for axioms that are violated.
    We use the color \green{green} in order to label a desirable property of the model, \red{red} for an undesirable property of the model. \yellow{Yellow} is used for properties/models whose violation might not be particularly relevant in order to explain empirical data, although they are interesting to consider. Axioms are grouped by category and the order in which they were presented in the text.}
    \label{table:axioms_models}
\end{table}

\begin{table}[H]
    \centering
    \begin{adjustwidth}{-1cm}{}
    \small
    \begin{tabular}{lcccccccccccccccccc} 
    \toprule
    \multicolumn{1}{c}{Result} & \multicolumn{5}{c}{Symmetries} & \phantom{a} & \multicolumn{2}{c}{Arbitrage} & \phantom{a} & \multicolumn{3}{c}{Fragmentation} & \phantom{a} & \multicolumn{3}{c}{Liquidity} & \phantom{a} & \multicolumn{1}{c}{Covariances} \\
    \cmidrule{2-6} \cmidrule{8-9} \cmidrule{11-13} \cmidrule{15-17} \cmidrule{19-19} & PI & DI & CI & SI & RI && SA & DA && WFI & SSFI & SFI && WCS & SCS & SS && PCC \\
    \cref{prop:fragm_to_liq_main} (i)  & & & & H & H && & & & & H & && \gcmark & & & &  \\
    \cref{prop:fragm_to_liq_main} (ii)  & & & & H & H && & & & & & H && & \gcmark & & &  \\
    \cref{prop:liq_to_frag_main} (i)  & & & & H & H && & & & & \gcmark &  && H &  & & &  \\
    \cref{prop:liq_to_frag_main} (ii)  & & & & H & H && & & & & & \gcmark && & H &  & &  \\
    \cref{prop:kyle_arb_cov_unique_main} & \gcmark & \gcmark & \gcmark & \gcmark & \gcmark && H & H && \gcmark & \gcmark & \gcmark && \gcmark & \gcmark & \gxmark && H \\ 
    \cref{prop:kyle_sym_unique_main} & \gcmark & H & H & H & H && \gcmark & \gcmark && \gcmark & \gcmark & \gcmark && \gcmark & \gcmark & \gxmark && \gcmark \\ 
    \bottomrule
    \end{tabular}
    \end{adjustwidth}
    \caption{\textbf{Salient relations among the axioms introduced in the paper.} \\
    The table summarises the results of different propositions relating axioms together. For a given result, we use the symbol H to denote a condition that holds by hypothesis. On the same row, we mark satisfied axioms using the notation of Table \ref{table:axioms_models}.}
    \label{table:axioms_relations}
\end{table}

\section{Goodness-of-fit of cross-impact models}
\label{sec:illustrative_examples_empirical}

\cref{sec:candidate_cross_impact_models,sec:axioms} listed desirable properties of cross-impact models and examined which were satisfied by a variety of candidate models. This enabled us to understand the theoretical implications of a given cross-impact model. However, well-behaved models which poorly explain data are of little use. The goal of this section is to assess the goodness-of-fit of the cross-impact models listed in \cref{sec:candidate_cross_impact_models} to understand which models satisfy desirable properties and fit data well.

\subsection{Methodology}

To assess goodness-of-fit, we select the timescale $\Delta t$ to be one minute in order to avoid microstructural effects while being small. For a given cross-impact model $\Lambda$, the predicted price change for the time window $[t, t + \Delta t]$ due to the measured order flow imbalance $\bq_t$ on that time window is
$$
\widehat{\bdp}_t := \Lambda(\Sigma_t, \Omega_t, R_t) \bq_t \, ,
$$
where $\Sigma_t, \Omega_t, R_t$ are the covariances defined in \cref{ass:sufficient_stats}, which we will estimate using empirical data. 
\\ \\
To evaluate quality of fit of the cross-impact model $\Lambda$, we compare the predicted price changes $\widehat{\bdp}_t$ to the realised price changes $\bdp_t$, using three different indicators of performance which emphasize different aspects of  prediction errors. All  three indicators are parametrized by a symmetric, positive definite matrix $M \in \spd{n}$, $M \neq 0$. Given a realization of the price process $\{ \bdp_t \}_{t=1}^T$ of length $T$ and a corresponding series of predictions  $\{ \widehat{\bdp}_t\}_{t=1}^T$, the $M$-weighted generalized $\rsq$ is defined as
$$
\rsq(M) := 1 - \dfrac{\sum_{1 \leq t \leq T} (\boldm{\Delta p}_t - \boldm{\widehat{\Delta p}}_t)^\top M (\boldm{\Delta p}_t - \boldm{\widehat{\Delta p}}_t)}{\sum_{1 \leq t \leq T} \boldm{\Delta p}^\top_t M \boldm{\Delta p}_t} \, .
$$
The close the score is to one, the better the fit to empirical data. To highlight different sources of error, we consider the following choices of $M$:
\begin{enumerate}[(i)]
    \itemsep0em
    \item $M = I_{\sigma} := \diag{\boldm{\sigma}}^{-1}$, to account for errors relative to the typical deviation of the asset considered. This type of error is relevant for strategies predicting idiosyncratic moves of the constituents of the basket, rather than strategies betting on correlated market moves.
    \item $M = J_{\sigma} := (\Sigma_{ii}^{-1/2} \Sigma_{jj}^{-1/2})_{1 \leq i,j \leq m}$, to check if the model successfully forecasts the overall direction of all assets. This is relevant for strategies predicting global moves of the constituents of the basket.
    \item $M = \Sigma^{-1}$, to consider how well the model predicts the individual modes of the return covariance matrix. This would be the relevant error measure for strategies that place a constant amount of risk on the modes of the correlation matrix, leveraging up combinations of products with low volatility and scaling down market direction that exhibit large fluctuations. \footnote{Note that this measure strongly penalizes models violating fragmentation invariance: errors along modes of zero risk should \emph{a-priori} be enhanced by an infinite amount. In this study we have decided to clip the eigenvalues of $\Sigma$ to a small, non-zero amount equal to $10^{-15}$.}
    \end{enumerate}
Given $M \in \spd{n}$, $M \neq 0$, we compute scores on empirical data in the following manner. First, we divide data into two subsets of roughly equal length: data from 2016 on the one hand and in 2017 on the other hand. Given data from year $X$ and year $Y$, we calibrate estimators and cross-impact models on year $X$ and use models to predict price changes in year $Y$, writing $\rsq_{X \rightarrow Y}(M)$ for the average score. In-sample scores are defined as $\rsq_{\textnormal{in}}(M) := \frac{1}{2}(\rsq_{2016 \rightarrow 2016}(M) + \rsq_{2017 \rightarrow 2017}(M))$ while out-of-sample scores are defined as $\rsq_{\textnormal{out}}(M) := \frac{1}{2}(\rsq_{2016 \rightarrow 2017}(M) + \rsq_{2017 \rightarrow 2016}(M))$.

\subsection{Data used}

To assess goodness-of-fit in a variety of different conditions, we stress-test models on three different markets with different key characteristics. We detail each dataset here and the reason we chose them. Detailed descriptions of each dataset, the estimation of covariance matrices and of the cross-impact models is given in \cref{app:data}.
\\ \\
The first dataset comprises two NYMEX Crude Oil future contracts and the corresponding Calendar Spread contract. The first two contracts (respectively, CRUDE0 and CRUDE1) entail an agreement to buy or sell 1000 barrels of oil either at the next month or at the subsequent month. The Calendar Spread CRUDE1\_0 swaps the front month future with the contract settling on the following month. Because of the strong correlations among the two futures, the price of the calendar spread has very small fluctuations. This dataset allows us to test the importance of fragmentation axioms. Further details about this data are given in \cref{app:data:crude}.
\\ \\
While relevant to illustrate the importance of fragmentation invariance, the previous dataset corresponds to a pathological case where $\Sigma$ has only one large non-zero eigenvalue, so that cross-impact models give similar results. To circumvent this issue, we look at 10-year US Treasury note futures and E-Mini S\&P500 futures. We collect data from the Chicago Mercantile Exchange and use the first two upcoming maturities of both contracts (respectively called SPMINI and SPMINI3 for E-Mini S\&P500 futures and 10USNOTE and 10USNOTE3 for 10-year US treasury notes). Further details about this data are given in \cref{app:data:bonds}.
\\ \\
The previous datasets give us no clear conclusion on the role of stability axioms. In both examples illiquid assets were highly correlated to other liquid assets. This extreme regime of correlations makes it harder to analyse the role of liquidity. To circumvent this issue, we study the behavior of cross-impact models in the low-correlation, many assets regime, using stocks data. Further details about this data are given in \cref{app:data:stocks}.

\subsection{Goodness-of-fit}

The goodness-of-fit results for each model, dataset and score are presented in \cref{fig:gof} in \cref{app:goodness_of_fit}. Overall, on all datasets, the cross-impact models with the best goodness-of-fit are the \relm{}, \pkyle{} and \rmle{} models. They significantly outperform models which do not account for cross-sectional effects, such as the \rdir{} model. In high-correlation regimes, such as on the Crude and Bonds and Indices datasets, this gap is more pronounced. Among these models, it is remarkable that the \pkyle{} model satisfies all axioms and achieves comparable performance to the \rmle{} model, which maximises empirical fit but has issues related to arbitrage.
\\ \\
Given these results, we focus on analysing the influence of certain market parameters on the goodness-of-fit of cross-impact models. The next section examines the role of the liquidity.

\subsection{Goodness-of-fit relative to liquidity}

\begin{figure}[t!]
	\centering
		\centering
		\includegraphics[width=0.58\linewidth]{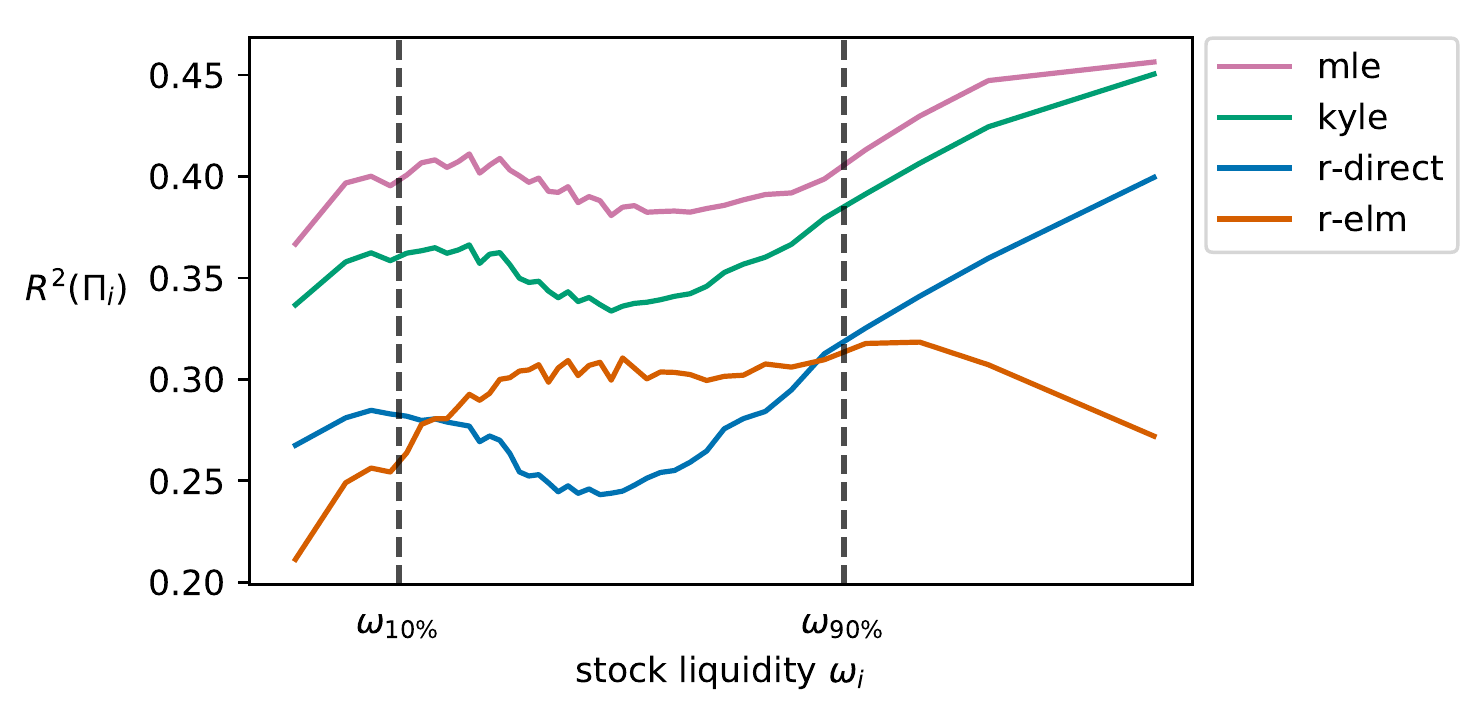}
		\caption{\textbf{Idiosyncratic scores as a function of liquidity.\\} For each stock in our dataset, we compute the the in-sample stock-specific scores $\rsq(\Pi_i)$ scores on 2016 data. We then represent the average in-sample stock-specific score as a function of the liquidity $\omega_i$, binning data by $\omega_i$ to smooth out noise. Results for the \rmle{} (in pink), \pkyle{} (in green), \rdir{} (in blue) and \relm{} (in orange) models are shown. We have further indicated the 10\% and 90\% quantiles of liquidity $\omega_{10\%}$ and $\omega_{90\%}$.}
	    \label{figure:r2_liquidity}
\end{figure}

An interesting feature of our stocks dataset is the heterogeneous liquidity among assets. This allows us to explore the influence of the liquidity of a given stock on the performance of different models. \cref{figure:r2_liquidity} shows the results of this analysis. Consistent with \cref{fig:gof}, we find that overall, in score terms, \rmle{}>\pkyle{}>\rdir{}>\relm{}. The \rdir{} model fares better for liquid stocks, where a larger fraction of variance can be explained by same-stock trades. Surprisingly, the same holds for \rmle{} and \pkyle{} models. The \relm{} model stands as an exception. It better explains price moves for stocks which are within the band of typical liquidity, between $\omega_{10\%}$ and $\omega_{90\%}$. This makes sense since the \relm{} model is self-stable as it aggregates liquidity of all stocks. Thus, though this assumption is justified for stocks of liquidity close to the average, it is violated outside of this zone. The \rmle{} and \pkyle{} models are not self-stable and better deal with very liquid or illiquid stocks. To further reinforce this point, for stocks of liquidity close to the average in our pool of stocks, the difference scores of the \pelm{} and \pkyle{} models reach a minimum. This is consistent with the fact that in the approximation $\Omega \approx \omega^2_{50\%} \id$, the two models coincide. Thus, violating self-stability (\cref{Axiom:self-stability}) is key to explain price changes for all ranges of liquidity within a basket of instruments.

\subsection{Robustness of goodness-of-fit} 

\begin{figure}[t!]
	\centering
		\centering
		\includegraphics[width=\linewidth]{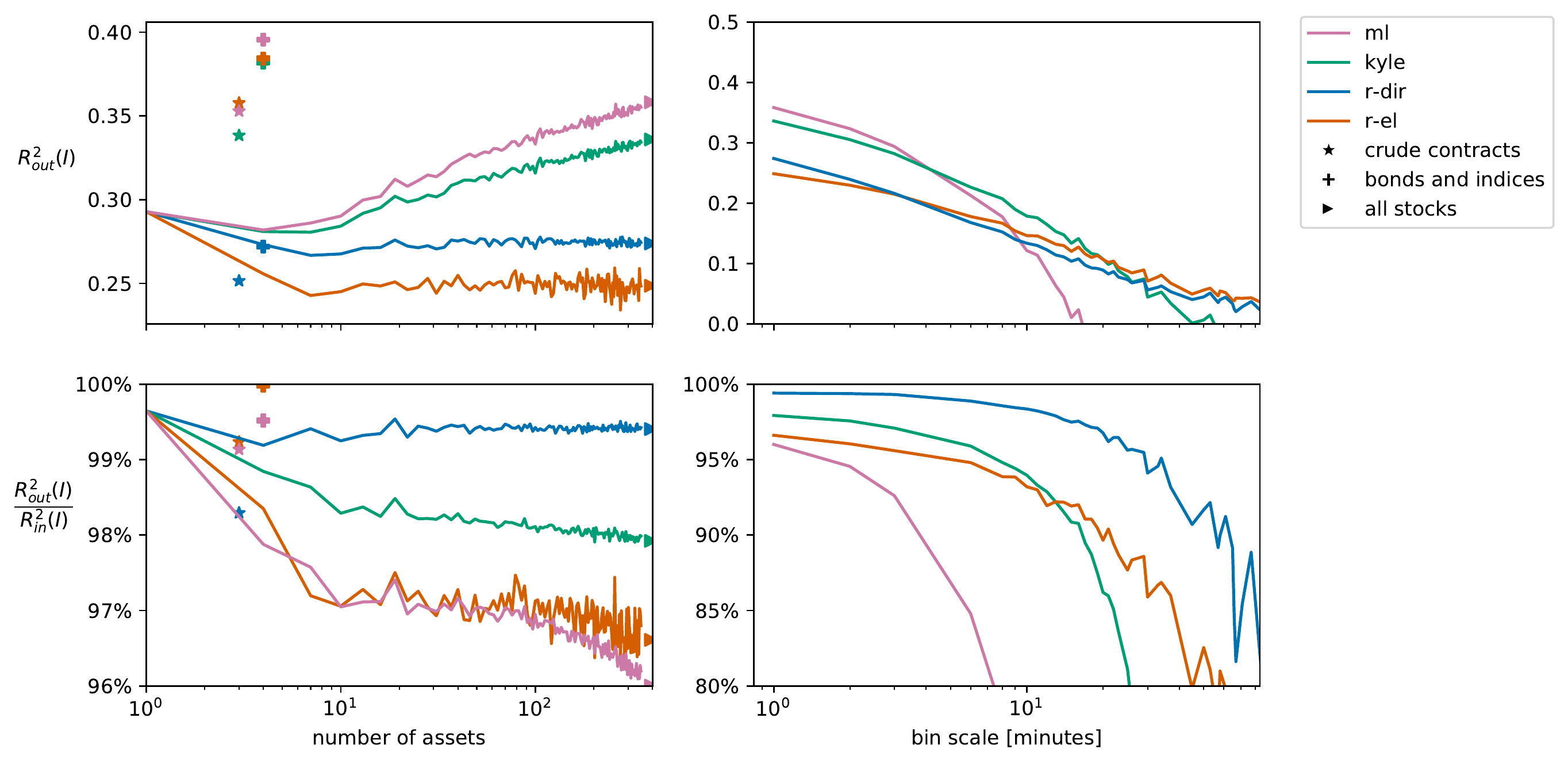}
		\caption{\textbf{Idiosyncratic score and overfitting as a function of the number of assets and bin timescale.} \\
		\textbf{Left column:} average out-of-sample idiosyncratic score $\rsq_{\textnormal{out}}(I_{\boldm{\sigma}})$ (top left) and overfitting coefficient $\frac{\rsq_{\textnormal{out}}(I_{\boldm{\sigma}})}{\rsq_{\textnormal{in}}(I_{\boldm{\sigma}})}$ (bottom left) computed using stocks data. Out-of-sample and in-sample scores were computed by randomly selecting a subset of stocks and computing scores on the given subset, repeating the procedure more when there are fewer stocks are selected than when a large proportion of stocks from our sample is considered. The average score for each models across all samples is then shown. Scores are shown for the \rmle{} (in pink), the \pkyle{} (in green), \rdir{} (in blue) and \relm{} (in orange). Stars show results for crude contracts, crosses for bonds and indices and triangles for all 393 stocks of our sample. \textbf{Right column:} idiosyncratic scores (top right) and overfitting coefficient (bottom right) as a function of the bin timescale. Scores were computed using the same procedure described in \cref{sec:stocks}, varying the bin parameter from 10 seconds up to around an hour.}
	    \label{figure:performance_nassets}
\end{figure}

The previous section compared the descriptive power of different cross-impact models. However, robustness of the different cross-impact models is also of interest. In \cref{figure:performance_nassets}, we show the out-of-sample score and overfitting coefficient for idiosyncratic price changes for our set of 393 stocks, as a function of the bin timescale and number of instruments. 
\\ \\
As expected, the number of degrees of freedom controls the overfitting of different models. This explains why, in terms of overfitting with respect to the number of instruments at the minute timescale, $\rdir{}<\pkyle{}<\rmle{} \approx \relm{}$. In contrast, models overfit less on futures, which suggests that overfitting decreases as the pairwise correlation between instruments increases. Furthermore, out-of-sample idiosyncratic scores for the \rmle{} and \pkyle{} model increase with the number of assets. A somewhat surprising result, despite the small pairwise correlation of instruments in our stock dataset and the large number of stocks considered in this study, is that idiosyncratic scores appear to keep increasing for more than $400$ assets. This suggests that there is still latent explanatory power in the dataset but only two models manage to extract it.
\\ \\
Focusing on the influence of the bin timescale, there is little overfitting at the minute timescale but it increases with the bin timescale. In particular, the good fit of the \rmle{} at small timescales quickly breaks down for larger timescales. On the other hand, both the \relm{} and \pkyle{} models are quite robust up until the 10 minute timescale. At this timescale, we expect our static approximation described in \cref{sec:setup_assumptions} to break down.

\section*{Conclusion}
\label{sec:conclusion}

Let us summarize what we have achieved.  Our main objective was to find suitable static cross-impact models given a set of empirical observations encoded in the sufficient statistics $(\Sigma, \Omega, R)$ which fit data well and led to well-behaved market dynamics. To do so, we introduced axioms, desirable properties of cross-impact models. We classified existing cross-impact models and characterised those which satisfy certain subset of axioms.
\\ \\
In all markets studied, our analysis confirms that cross-impact models are well suited to explain price changes, showing significant improvement compared to impact models in which cross-sectional effects are disregarded (see~\cref{fig:gof}). However, only the \pkyle{} and \rmle{} models perform well on all markets studied, whereas only the \pkyle{} model prevents arbitrage and is well-behaved when trading both liquid and illiquid instruments. This makes it an ideal model for other applications, such as optimal execution.
\\ \\
Independently of our specific model implementations, we also established a few characterisation results of cross-impact models from axioms. In particular, we showed that symmetry axioms (\cref{Axiom:split_invariance,Axiom:invariance_rotation,Axiom:permutational_invariance,Axiom:direct_invariance,Axiom:cash_invariance}) alone completely characterise return-based cross-impact models and that the \pkyle{} model is the only model to satisfy all cross-impact axioms.
\\ \\
Even though we have considered a linear, single-period scenario, the ideas introduced in this paper could be generalised. For instance, the framework can be adapted to deal with derivatives \cite{tomas2021cross}. Another topic is the generalisation of this framework to account for the auto-correlation of the order flow. This question is examined in \cite{Tomas2019FromModels, rosenbaum2021characterisation} but an axiom-first approach is still lacking. 
\\ \\
Finally, our results can be used to choose adequate cross-impact models for applications discussed in the literature. For optimal trading applications \cite{ekren2018optimal,ekren2019portfolio,garleanu2016dynamic,Lehalle2019ACorrelations}, it is natural to favor models which satisfy arbitrage axioms to prevent ill-behaved trading strategies. When modeling cross-impact at the microstructural level \cite{tsoukalas2019dynamic} then arbitrage, fragmentation and liquidity axioms are all important to rule out price manipulation. For each domain, we highlighted which cross-impact models would be good candidates.

\bibliographystyle{plain}
\bibliography{bibliography_main}

\clearpage

\appendix
\section{Proofs}
\label{app:proofs_models}

This section contains proofs of some results stated in the main text.

\subsection{Proof of \cref{prop:liq_to_frag_main,prop:fragm_to_liq_main}}
\label{sec:fragm_liq}

In this section, we establish some links between fragmentation and liquidity axioms. To do so, in the rest of this section, we will make use of two kinds of regularised covariance and response matrices. Given a linear subspace $V$, we first introduce the order flow regularised estimators:
\begin{align*}
    \Sigmaqeps &:= \Sigma \\
    \Omegaqeps &:= (\bar \Pi_V + \varepsilon \Pi_V) \Omega(\bar \Pi_V + \varepsilon \Pi_V) \\
    \Rqeps &:= R (\bar \Pi_V + \varepsilon \Pi_V) \, .
\end{align*}
These correspond to the multiplication of liquidity of instruments in $V$ by $\varepsilon$. Similarly, we introduce the price regularised estimators:
\begin{align*}
    \Sigmapeps &:= (\bar \Pi_V + \varepsilon \Pi_V) \Sigma (\bar \Pi_V + \varepsilon \Pi_V) \\
    \Omegapeps &:= \Omega \\
    \Rpeps &:= (\bar \Pi_V + \varepsilon \Pi_V) R \, .
\end{align*}
These correspond to the multiplication of price fluctuations of instruments in $V$ by $\varepsilon$. We begin with a convenient lemma that relates liquidity properties to fragmentation properties.

\begin{lemma}
    \label{lemma:expansion_fragmentation_liquidity}
    Let $\Lambda$ be a split-invariant and rotation-invariant (\cref{Axiom:invariance_rotation,Axiom:split_invariance}) cross-impact model and a linear subspace $V$ such that $\emptyset \subset V \subseteq \mathbb{R}^n$.
    Then, for all $\varepsilon > 0$, we have
    \begin{align*}
        \Lambda(\Sigmaqeps, \Omegaqeps, \Rqeps) 
        =& \bar \Pi_V \Lambda(\Sigmapeps, \Omegapeps, \Rpeps) \bar \Pi_V \\
        & +\varepsilon^{-1} \left[ \bar \Pi_V \Lambda(\Sigmapeps, \Omegapeps, \Rpeps) \Pi_V 
        +  \Pi_V \Lambda(\Sigmapeps, \Omegapeps, \Rpeps) \bar \Pi_V \right] \\
        & +  \varepsilon^{-2} \Pi_V \Lambda(\Sigmapeps, \Omegapeps, \Rpeps) \Pi_V \, .
    \end{align*}
\end{lemma}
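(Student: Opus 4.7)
The whole lemma reduces to establishing the single identity
\begin{equation}
  \label{eq:pf_main_identity}
  \Lambda(\Sigmaqeps, \Omegaqeps, \Rqeps) \;=\; M_\varepsilon^{-1} \, \Lambda(\Sigmapeps, \Omegapeps, \Rpeps) \, M_\varepsilon^{-1},
\end{equation}
where I write $M_\varepsilon := \bar \Pi_V + \varepsilon \Pi_V$, noting that $M_\varepsilon$ is symmetric, positive and invertible with $M_\varepsilon^{-1} = \bar \Pi_V + \varepsilon^{-1} \Pi_V$. Once \eqref{eq:pf_main_identity} is in hand, inserting the expansion $M_\varepsilon^{-1} = \bar \Pi_V + \varepsilon^{-1} \Pi_V$ on either side of $\Lambda(\Sigmapeps, \Omegapeps, \Rpeps)$ and using $\Pi_V \bar \Pi_V = \bar \Pi_V \Pi_V = 0$ produces exactly the four-term decomposition stated in the lemma, collecting the orders $\varepsilon^{0}$, $\varepsilon^{-1}$ and $\varepsilon^{-2}$.

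To prove \eqref{eq:pf_main_identity}, the idea is to exploit rotation and split invariance in tandem. Since $M_\varepsilon$ is symmetric, pick an orthogonal $O \in \orth{n}$ diagonalising $\Pi_V$, so that $O^\top \Pi_V O$ is a coordinate projector and $D_\varepsilon := O^\top M_\varepsilon O$ is the diagonal matrix whose entries are $\varepsilon$ on the coordinates of $V$ and $1$ elsewhere. I would first apply rotation invariance (\cref{Axiom:invariance_rotation}) to pull the conjugation by $O$ inside $\Lambda(\Sigmaqeps,\Omegaqeps,\Rqeps)$; this reduces the problem to
$\Lambda(O^\top \Sigma O,\, D_\varepsilon (O^\top \Omega O) D_\varepsilon,\, (O^\top R O) D_\varepsilon)$, where now the regularising matrix is diagonal. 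Next, apply split invariance (\cref{Axiom:split_invariance}) with $D = D_\varepsilon$ (legitimate since $\varepsilon > 0$) to transfer the factors $D_\varepsilon$ off the $\Omega$- and $R$-arguments; this produces an overall conjugation by $D_\varepsilon^{-1}$ together with new arguments $(D_\varepsilon O^\top \Sigma O\, D_\varepsilon,\, O^\top \Omega O,\, D_\varepsilon O^\top R O)$. Finally, rotate back by $O$: since $O D_\varepsilon O^\top = M_\varepsilon$, the three arguments collapse to $(\Sigmapeps, \Omegapeps, \Rpeps)$ and the outer $D_\varepsilon^{-1}$-conjugation becomes the $M_\varepsilon^{-1}$-conjugation in \eqref{eq:pf_main_identity}.

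The main obstacle is purely bookkeeping: split invariance treats the three arguments asymmetrically (with $D^{-1}$ sandwiching $\Sigma$, $D$ sandwiching $\Omega$, and the mixed pattern $D^{-1} R D$ for the response), so one must check carefully which side each $D_\varepsilon$ factor ends up on during the peeling step, and verify that the pattern $(D_\varepsilon(\cdot)D_\varepsilon, (\cdot), D_\varepsilon (\cdot))$ produced by the first rotation step matches exactly the transformation rule of split invariance. A cleaner, equivalent shortcut is to first observe (as remarked right after \cref{Axiom:invariance_rotation} in the paper) that rotation and split invariance together yield the combined invariance
\[
  \Lambda\bigl(M^{-1} \Sigma M^{-1},\, M \Omega M,\, M^{-1} R M\bigr) \;=\; M^{-1} \Lambda(\Sigma, \Omega, R) M^{-1}
\]
for any symmetric positive-definite $M$, obtained by diagonalising $M$; then applying this with $M = M_\varepsilon$ to the triple $(\Sigmapeps, \Omegapeps, \Rpeps)$ gives \eqref{eq:pf_main_identity} in one line, since $M_\varepsilon^{-1} \Sigmapeps M_\varepsilon^{-1} = \Sigma = \Sigmaqeps$, $M_\varepsilon \Omegapeps M_\varepsilon = \Omegaqeps$, and $M_\varepsilon^{-1} \Rpeps M_\varepsilon = R M_\varepsilon = \Rqeps$.
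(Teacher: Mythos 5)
Your proposal is correct and follows essentially the same route as the paper's own proof: conjugate by an orthogonal matrix sending $V$ to a coordinate subspace so that rotation invariance applies, apply split invariance with the diagonal matrix having entries $\varepsilon$ on the $V$-coordinates and $1$ elsewhere (which converts the order-flow-regularised triple into the price-regularised one), rotate back, and expand $\bar\Pi_V + \varepsilon^{-1}\Pi_V$ on both sides. Your closing "shortcut" via the combined invariance $\Lambda(M^{-1}\Sigma M^{-1}, M\Omega M, M^{-1}RM) = M^{-1}\Lambda(\Sigma,\Omega,R)M^{-1}$ for symmetric positive-definite $M$ is just a clean repackaging of those same three steps, and the verification $M_\varepsilon^{-1}\Sigmapeps M_\varepsilon^{-1}=\Sigmaqeps$, $M_\varepsilon\Omegapeps M_\varepsilon=\Omegaqeps$, $M_\varepsilon^{-1}\Rpeps M_\varepsilon=\Rqeps$ is accurate.
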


\begin{proof}
Let $(u_1, \cdots, u_k)$ an orthonormal basis for the linear subspace $V$, where $k = \dim(V)$ and $(u_1, \cdots, u_\nassets)$ a completed orthonormal basis on $\R^{\nassets}$. We write $A := (u_1 \mid u_2 \mid \cdots \mid u_k) \in \mat{\nassets,k}$ and $U := (u_1 \mid u_2 \mid \cdots \mid u_\nassets) \in \mat{\nassets,\nassets}$. Then $\Pi_V = A A^\top$ and $\bar \Pi_V = I - A A^\top$. By rotation invariance, we have
$$
U \Lambda(\Sigmaqeps, \Omegaqeps, \Rqeps) U^\top = \Lambda(U \Sigmaqeps U^\top, U \Omegaqeps U^\top, U \Rqeps U^\top) \, .
$$
Since $(u_1, \cdots, u_\nassets)$ is an orthonormal basis, $UA$ only has non-zero entries along the diagonal. Writing $\widehat A := UA$ we can apply split invariance with $D = (I - \widehat{A} \widehat{A}^\top + \varepsilon \widehat{A}\widehat{A}^\top)$ to obtain
$$
U \Lambda(\Sigmaqeps, \Omegaqeps, \Rqeps) U^\top = D \Lambda(D^{-1} U \Sigmaqeps U^\top D^{-1}, D U \Omegaqeps U^\top D, D^{-1} U \Rqeps U^\top D) D \, .
$$
Straightforward computations show that
\begin{align*}
    D^{-1} U \Sigmaqeps U^\top D^{-1} = U \Sigmapeps U^\top \, \hspace{10pt} D U \Omegaqeps U^\top D = U \Omegapeps U^\top \, \hspace{10pt} D^{-1} U \Rqeps U^\top D = U \Rpeps U^\top \, .
\end{align*}
Therefore
\begin{align*}
U \Lambda(\Sigmaqeps, \Omegaqeps, \Rqeps) U^\top =& (I - \widehat{A} \widehat{A}^\top + \varepsilon^{-1} \widehat{A}\widehat{A}^\top) \Lambda \left(U \Sigmapeps U^\top, U \Omegapeps U^\top, U \Rpeps U^\top \right) (I - \widehat{A} \widehat{A}^\top + \varepsilon^{-1} \widehat{A}\widehat{A}^\top) \, .
\end{align*}
Applying rotational invariance once again we get
\begin{align*}
U \Lambda(\Sigmaqeps, \Omegaqeps, \Rqeps) U^\top =& U (I - A A^\top) \Lambda \left( \Sigmapeps,  \Omegapeps, \Rpeps \right) (I - A A^\top) U^\top \\
& + \varepsilon^{-1} U \left[ (I - A A^\top) \Lambda \left( \Sigmapeps,  \Omegapeps, \Rpeps \right) A A^\top + A A^\top \Lambda \left( \Sigmapeps,  \Omegapeps, \Rpeps \right) (I - A A^\top) \right] U^\top \\
& + \varepsilon^{-2} U A A^\top \Lambda \left( \Sigmapeps,  \Omegapeps, \Rpeps \right) A A^\top U^\top \, .
\end{align*}
This finally yields
\begin{align*}
\Lambda(\Sigmaqeps, \Omegaqeps, \Rqeps) =& \bar \Pi_V \Lambda \left( \Sigmapeps,  \Omegapeps, \Rpeps \right) \bar \Pi_V \\
& + \varepsilon^{-1} \left[ \bar \Pi_V \Lambda \left( \Sigmapeps,  \Omegapeps, \Rpeps \right) \Pi_V + \Pi_V \Lambda \left( \Sigmapeps,  \Omegapeps, \Rpeps \right) \bar \Pi_V \right] \\
& + \varepsilon^{-2} \Pi_V \Lambda \left( \Sigmapeps,  \Omegapeps, \Rpeps \right) \Pi_V \, ,
\end{align*}
which concludes the proof.
\end{proof}

In a similar fashion as \cref{lemma:expansion_fragmentation_liquidity}, one can prove the following Lemma.

\begin{lemma}
    \label{lemma:expansion_liquidity_fragmentation}
    Let $\Lambda$ be a split-invariant and rotation-invariant (\cref{Axiom:invariance_rotation,Axiom:split_invariance}) cross-impact model and a subspace $V$ such that $\emptyset \subset V \subseteq \mathbb{R}^n$. Then we have
    \begin{align*}
        \label{eq:lemma_fragm_split}
        \Lambda(\Sigmapeps, \Omegapeps, \Rpeps) 
        =& \bar \Pi_V \Lambda(\Sigmaqeps, \Omegaqeps, \Rqeps) \bar \Pi_V \\
        & +\varepsilon^{-1} \left[ \bar \Pi_V \Lambda(\Sigmaqeps, \Omegaqeps, \Rqeps) \Pi_V
        +  \Pi_V \Lambda(\Sigmaqeps, \Omegaqeps, \Rqeps) \bar \Pi_V \right] \\
        & + \varepsilon^{-2} \Pi_V \Lambda(\Sigmaqeps, \Omegaqeps, \Rqeps) \Pi_V.
    \end{align*}
\end{lemma}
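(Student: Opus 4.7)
The plan is to mirror the proof of \cref{lemma:expansion_fragmentation_liquidity}, interchanging the roles of the order-flow regularised and price regularised triples. The crucial observation, obtained by direct computation from the definitions, is that the two triples are linked by conjugation with the matrix $E := \bar \Pi_V + \varepsilon \Pi_V$:
\[
\Sigmaqeps = E^{-1} \Sigmapeps E^{-1}, \qquad \Omegaqeps = E\, \Omegapeps\, E, \qquad \Rqeps = E^{-1}\, \Rpeps\, E.
\]
These identities are exactly the shape of a split-invariance transformation with $D = E^{-1}$, once one first rotates $E$ into diagonal form.

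Following \cref{lemma:expansion_fragmentation_liquidity}, I will fix an orthonormal basis $u_1, \ldots, u_k$ of $V$, complete it to an orthonormal basis of $\mathbb{R}^n$, and form the matrices $U$ and $A$ as there. By rotational invariance,
\[
\Lambda(\Sigmapeps, \Omegapeps, \Rpeps) \;=\; U^\top \Lambda\!\left(U \Sigmapeps U^\top,\; U \Omegapeps U^\top,\; U \Rpeps U^\top\right) U.
\]
Inside this rotated expression, $U E U^\top$ is diagonal, so split invariance applies with the diagonal matrix sending the rotated $(p)$-triple onto the rotated $(q)$-triple. Undoing the rotation then produces a conjugation identity of the form
\[
\Lambda(\Sigmapeps, \Omegapeps, \Rpeps) \;=\; M\, \Lambda(\Sigmaqeps, \Omegaqeps, \Rqeps)\, M,
\]
where $M$ is a linear combination of $\bar \Pi_V$ and $\Pi_V$. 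Expanding this conjugation and using the orthogonality relation $\Pi_V \bar \Pi_V = 0$ then yields the four-term block decomposition stated in the lemma.

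The main obstacle will be purely bookkeeping: one has to verify the three split-invariance transformation identities carefully, and keep track of the direction in which $E^{\pm 1}$ enters each of them, since this is precisely what determines whether the off-diagonal and $\Pi_V$-block coefficients in the final expansion come out as $\varepsilon, \varepsilon^2$ or as $\varepsilon^{-1}, \varepsilon^{-2}$ (consistency with \cref{lemma:expansion_fragmentation_liquidity}, via the identity $(\bar \Pi_V + \varepsilon \Pi_V)(\bar \Pi_V + \varepsilon^{-1} \Pi_V) = \id$, serves as a useful cross-check). No new conceptual ingredient beyond that of \cref{lemma:expansion_fragmentation_liquidity} is required.
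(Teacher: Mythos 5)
Your overall strategy is the right one and is exactly what the paper intends (the paper itself only says the proof proceeds ``in a similar fashion'' as \cref{lemma:expansion_fragmentation_liquidity}), and your three conjugation identities are correct: with $E := \bar \Pi_V + \varepsilon \Pi_V$ one indeed has $\Sigmaqeps = E^{-1} \Sigmapeps E^{-1}$, $\Omegaqeps = E \Omegapeps E$ and $\Rqeps = E^{-1} \Rpeps E$. However, you defer to ``bookkeeping'' precisely the step that carries all the content, namely whether the conjugating matrix $M$ is $E$ or $E^{-1}$, and your own identities already force the answer. They say that the $q$-triple is the image of the $p$-triple under the split transformation with $D = E$ (equivalently, the $p$-triple is the image of the $q$-triple under $D = E^{-1}$); applying split invariance (after the rotation that diagonalises $E$) therefore gives
\begin{equation*}
\Lambda(\Sigmaqeps, \Omegaqeps, \Rqeps) = E^{-1}\, \Lambda(\Sigmapeps, \Omegapeps, \Rpeps)\, E^{-1}
\qquad\Longleftrightarrow\qquad
\Lambda(\Sigmapeps, \Omegapeps, \Rpeps) = E\, \Lambda(\Sigmaqeps, \Omegaqeps, \Rqeps)\, E .
\end{equation*}
Since $E = \bar \Pi_V + \varepsilon \Pi_V$, expanding the right-hand identity produces coefficients $1$, $\varepsilon$ and $\varepsilon^{2}$ on the four blocks, \emph{not} the $1$, $\varepsilon^{-1}$ and $\varepsilon^{-2}$ appearing in the statement you are asked to prove. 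Your proposed cross-check in fact detects this: composing the stated \cref{lemma:expansion_liquidity_fragmentation} with \cref{lemma:expansion_fragmentation_liquidity} would give $\Lambda(\Sigmaqeps, \Omegaqeps, \Rqeps) = E^{-2} \Lambda(\Sigmaqeps, \Omegaqeps, \Rqeps) E^{-2}$, which forces $\Pi_V \Lambda = \Lambda \Pi_V = 0$ for all $\varepsilon \neq 1$; this is false for, e.g., the split- and rotation-invariant model $\Lambda(\Sigma,\Omega,R) = R\Omega^{-1}$.

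So the gap is concrete: as written, your argument cannot terminate in the displayed formula, because the formula with inverse powers of $\varepsilon$ is not what the split-invariance computation yields (the statement as printed appears to carry a sign error in the exponents, inherited from \cref{lemma:expansion_fragmentation_liquidity}). To complete the exercise you must either (a) carry the bookkeeping through and state explicitly that the correct expansion has coefficients $\varepsilon$ and $\varepsilon^{2}$, flagging the discrepancy with the printed statement, or (b) exhibit a derivation of the printed exponents --- which your identities show is impossible. Leaving the direction of $E^{\pm 1}$ undetermined is not a cosmetic omission here; it is the whole lemma.
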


\cref{lemma:expansion_liquidity_fragmentation,lemma:expansion_fragmentation_liquidity} enable us to relate cross-stability to fragmentation invariance. This is the topic of the next proposition. 

\begin{propositionp}{\ref*{prop:fragm_to_liq_main}}
\label{prop:fragm_to_liq}
Let $\Lambda$ be a jointly continuous cross-impact model which satisfies split and rotational invariance~(\cref{Axiom:invariance_rotation,Axiom:split_invariance}). Then
\begin{enumerate}[(i)]
    \item If $\Lambda$ satisfies semi-strong fragmentation invariance~(\cref{Axiom:semi_strong_fragmentation_invariance}), then it is weakly cross-stable (\cref{Axiom:weak_cross_stability}).
    \item If $\Lambda$ is strongly fragmentation invariant (\cref{Axiom:strong_fragmentation_invariance}), then it is strongly cross-stable (\cref{Axiom:strong_cross_stability}).
\end{enumerate}
\end{propositionp}

\begin{proof}
We first prove (i). Since the cross-impact model $\Lambda$ is continuous and satisfies semi-strong fragmentation invariance we have 
\begin{align*}
\Lambda(\Sigmapeps, \Omegapeps, \Rpeps) \Pi_V & \underset{\varepsilon \to 0}{\to} 0 \\
\Pi_V \Lambda(\Sigmapeps, \Omegapeps, \Rpeps) & \underset{\varepsilon \to 0}{\to} 0 \, .
\end{align*}
Plugging the above in the results of \cref{lemma:expansion_liquidity_fragmentation} yields
\begin{align*}
    \bar \Pi_V \Lambda(\Sigmapeps, \Omegapeps, \Rpeps) \Pi_V  = \varepsilon^{-1} \bar \Pi_V \Lambda(\Sigmaqeps, \Omegaqeps, \Rqeps) \Pi_V & \underset{\varepsilon \to 0}{\to} 0 \\
    \Pi_V \Lambda(\Sigmapeps, \Omegapeps, \Rpeps) \bar \Pi_V  = \varepsilon^{-1} \Pi_V \Lambda(\Sigmaqeps, \Omegaqeps, \Rqeps) \bar \Pi_V & \underset{\varepsilon \to 0}{\to} 0 \, .
\end{align*}
Thus $\Lambda$ is weakly cross-stable. We now prove (ii). Continuity at $\varepsilon=0$ and strong fragmentation invariance yield
$$
    \bar \Pi_V \Lambda(\Sigmapeps, \Omegapeps, \Rqeps) \bar \Pi_V \underset{\varepsilon \to 0}{\to} \bar \Pi_V \Lambda(\bar \Pi_V \Sigma \bar \Pi_V, \bar \Pi_V \Omega \bar \Pi_V, \bar \Pi_V R \bar \Pi_V) \bar \Pi_V  \, .
$$
Plugging the above into the results of \cref{lemma:expansion_liquidity_fragmentation} gives
$$
        \bar \Pi_V \Lambda(\Sigmaqeps, \Omegaqeps, \Rqeps) \bar \Pi_V
        = \bar \Pi_V \Lambda(\Sigmapeps, \Omegapeps, \Rpeps) \bar \Pi_V \underset{\varepsilon \to 0}{\to} \bar \Pi_V \Lambda(\bar \Pi_V \Sigma \bar \Pi_V, \bar \Pi_V \Omega \bar \Pi_V, \bar \Pi_V R \bar \Pi_V) \bar \Pi_V \, .
$$
This implies that $\Lambda$ is strongly cross-stable.
\end{proof}

Interestingly, the converse of \cref{prop:fragm_to_liq} does not hold, thus indicating that the fragmentation invariance properties play a more fundamental role than liquidity related axioms. The next proposition shows the converse, provided some additional regularity of the cross-impact model.

\begin{propositionp}{\ref*{prop:liq_to_frag_main}}
\label{prop:liq_to_frag}
Let $\Lambda$ be a jointly continuous cross-impact model which satisfies split and rotational invariance~(\cref{Axiom:invariance_rotation,Axiom:split_invariance}). We further assume that, for every linear subspace $V$ and using the previous notations, $\varepsilon^{2} \Lambda(\Sigmaqeps, \Omegaqeps, \Rqeps)  \underset{\varepsilon \to 0}{\to} 0$. Then
\begin{enumerate}[(i)]
    \item If $\Lambda$ is weakly cross-stable (\cref{Axiom:weak_cross_stability}), then it satisfies semi-strong fragmentation invariance~(\cref{Axiom:semi_strong_fragmentation_invariance}).
    \item If $\Lambda$ is strongly cross-stable (\cref{Axiom:strong_cross_stability}), then it is strongly fragmentation invariant (\cref{Axiom:strong_fragmentation_invariance}). 
\end{enumerate}
\end{propositionp}

\begin{proof}
We first prove (i). Since the cross-impact model $\Lambda$ is weakly cross-stable we have 
\begin{align*}
\bar \Pi_V \Lambda(\Sigmaqeps, \Omegaqeps, \Rqeps) \Pi_V & \underset{\varepsilon \to 0}{\to} 0 \\
\Pi_V \Lambda(\Sigmaqeps, \Omegaqeps, \Rqeps) \bar \Pi_V & \underset{\varepsilon \to 0}{\to} 0 \, .
\end{align*}
Furthermore, by assumption we have 
$$
\varepsilon^{2} \Lambda(\Sigmaqeps, \Omegaqeps, \Rqeps)  \underset{\varepsilon \to 0}{\to} 0.
$$
Plugging the above in the results of \cref{lemma:expansion_fragmentation_liquidity} yields
\begin{align*}
    \bar \Pi_V \Lambda(\Sigmaqeps, \Omegaqeps, \Rqeps) \Pi_V &= \varepsilon^{-1} \bar \Pi_V \Lambda(\Sigmapeps, \Omegapeps, \Rpeps) \Pi_V   \underset{\varepsilon \to 0}{\to} 0 \\
    \Pi_V \Lambda(\Sigmaqeps, \Omegaqeps, \Rqeps) \bar \Pi_V &= \varepsilon^{-1} \Pi_V \Lambda(\Sigmapeps, \Omegapeps, \Rpeps) \bar \Pi_V   \underset{\varepsilon \to 0}{\to} 0 \\
    \varepsilon^2 \Pi_V \Lambda(\Sigmaqeps, \Omegaqeps, \Rqeps) \Pi_V &= \Pi_V \Lambda(\Sigmapeps, \Omegapeps, \Rpeps) \bar \Pi_V \underset{\varepsilon \to 0}{\to} 0 \, .
\end{align*}
Combining the above and using continuity, we obtain 
\begin{align*}
\Pi_V \Lambda(\Sigmapeps, \Omegapeps, \Rpeps) & \underset{\varepsilon \to 0}{\to} 0 = \Pi_V \Lambda(\bar \Pi_V \Sigma \bar \Pi_V , \Omega, \bar \Pi_V R) \\
\Lambda(\Sigmapeps, \Omegapeps, \Rpeps) \Pi_V & \underset{\varepsilon \to 0}{\to} 0 = \Lambda(\bar \Pi_V \Sigma \bar \Pi_V , \Omega, \bar \Pi_V R) \Pi_V \, .
\end{align*}
Thus this proves that $\Lambda$ is semi-strongly fragmentation invariant. We now prove (ii). Continuity at $\varepsilon=0$ and strong cross-stability yield
$$
    \bar \Pi_V \Lambda(\Sigmaqeps, \Omegaqeps, \Rqeps) \bar \Pi_V \underset{\varepsilon \to 0}{\to} \bar \Pi_V \Lambda(\bar \Pi_V \Sigma \bar \Pi_V, \bar \Pi_V \Omega \bar \Pi_V, \bar \Pi_V R \bar \Pi_V) \bar \Pi_V  \, .
$$
Plugging the above into the results of \cref{lemma:expansion_fragmentation_liquidity} gives
$$
        \bar \Pi_V \Lambda(\Sigmaqeps, \Omegaqeps, \Rqeps) \bar \Pi_V
        = \bar \Pi_V \Lambda(\Sigmapeps, \Omegapeps, \Rpeps) \bar \Pi_V \underset{\varepsilon \to 0}{\to} \bar \Pi_V \Lambda(\bar \Pi_V \Sigma \bar \Pi_V, \bar \Pi_V \Omega \bar \Pi_V, \bar \Pi_V R \bar \Pi_V) \bar \Pi_V \, .
$$
This implies that $\Lambda$ is strongly fragmentation invariant.
\end{proof}

\cref{prop:liq_to_frag,prop:fragm_to_liq} show that fragmentation and cross-stability axioms are related. Furthermore, for cross-impact models which satisfy the regularity property of \cref{prop:liq_to_frag}, the two sets of axioms are equivalent.

\subsection{Proof of \cref{prop:kyle_sym_unique_main,prop:kyle_arb_cov_unique_main}}
\label{sec:connections}
In this section, we characterise the models which satisfy the axioms introduced in \cref{sec:axioms}. We begin with the following proposition, the proof of which is heavily inspired by~\cite{Caballe1994ImperfectNeutrality,delMolino2018TheDifferent}.

\begin{propositionp}{\ref*{prop:kyle_arb_cov_unique_main}}
\label{prop:kyle_arb_cov_unique}
Let $\Lambda$ be a symmetric, positive-semidefinite and return covariance consistent cross-impact model (\cref{Axiom:symmetric,Axiom:positive_semidefinite,Axiom:consistency_correlation}).
Then $\Lambda = \Lambda_{\emph{\pkyle}}$ up to a multiplicative constant.
\end{propositionp}

\begin{proof}
	Let $\Lambda$ be a cross-impact model which satisfies \cref{Axiom:positive_semidefinite,Axiom:consistency_correlation} and $(\Sigma, \Omega, R) \in (\spd{n} \times \pd{n} \times \mat{n})$. We assume for convenience that the multiplicative constant in \cref{Axiom:consistency_correlation} is one. Writing $\Lambda$ for $\Lambda(\Sigma, \Omega, R) $, and $\Lc$ for a matrix such that $\Omega = \Lc \Lc^\top$, we have
	\begin{equation*}
	\Sigma = \Lambda \Omega \Lambda^\top  = \Lambda \Lc \Lc^\top \Lambda^\top = (\Lambda \Lc) (\Lambda \Lc)^\top.
	\end{equation*}
	Thus, by unicity up to a rotation of the square root decomposition, writing $\Gc$ for a matrix such that $\Sigma = \Gc \Gc^\top$, there exists a rotation $O$ such that $\Lambda  =  \Gc O \Lc^{-1}$. Furthermore, since $\Lambda$ is symmetric,
	$$
	\Gc O \Lc^{-1} = (\Gc O \Lc^{-1})^\top.
	$$
	Rewriting, we find
	$$
	\Lc^\top \Gc O = O^\top \Gc^\top \Lc,
	$$
	so that the matrix $\Lc^\top \Gc O$ is symmetric and satisfies
	\begin{align*}
	(\Lc^\top \Gc O) (\Lc^\top \Gc O)^\top = (\Lc \Gc) (\Lc^\top \Gc)^\top \, .
	\end{align*}
	Since $(\Lc^{\top} \Gc) (\Lc^{\top} \Gc)^\top$ is symmetric positive semi-definite, the symmetric square root is unique and
	$$
	\Lc^\top \Gc O = \sqrt{(\Lc^{\top} \Gc) (\Lc^{\top} \Gc)^\top}.
	$$
	Plugging this back into the expression of the cross-impact matrix yields the result:
	$$
	\Lambda = \Gc O \Lc^{-1} = \Lc^{-\top} \sqrt{(\Lc^{\top} \Gc) (\Lc^{\top} \Gc)^\top} \Lc^{-1} = \Lc^{-\top} \sqrt{\Lc^{\top} \Sigma \Lc} \Lc^{-1} \, .
	$$
\end{proof}

Hence, there is a single symmetric, positive-semidefinite, covariance-consistent, cross-impact model. Given that the fragmentation-related axioms seem so fundamental, one might wonder how many models one can build that satisfy that family of properties. Surprisingly, we find that the class of models enjoying both split invariance and rotational invariance is quite small, as shown in the next lemma.

\begin{lemma}
    \label{lemma:change_of_basis_diagonal_omega}
	Let $\Lambda$ be a cross-impact model which satisfies \cref{Axiom:invariance_rotation,Axiom:split_invariance}. Then, for all $(\Sigma, \Omega, R) \in (\spd{n} \times \pd{n} \times \mat{n})$, it can be written as
	$$
	\label{eq:lemma_rot_split}
	\Lambda(\Sigma, \Omega, R) = \Lc^{-\top} U \Lambda(
	U^\top \hat \Sigma U,
	\id,
	U^\top \hat R U) U^\top \Lc^{-1},
	$$
	where
	\begin{align*}
	    \Omega &= \Lc \Lc^\top \\
	    \hat \Sigma &= \Lc^{\top} \Sigma \Lc \\
	    \hat R &= \Lc^{\top} R \Lc^{-\top}
	\end{align*}
	and $U$ is an orthogonal matrix (i.e., $U U^\top = \id$).
\end{lemma}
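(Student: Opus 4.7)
The plan is to use rotational and split invariances successively to reduce the second argument $\Omega$ to the identity, and then apply rotational invariance once more to reintroduce the arbitrary orthogonal factor $U$.

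Since $\Omega \in \pd{n}$, I first diagonalize it as $\Omega = P \Xi P^\top$, with $P$ orthogonal and $\Xi$ diagonal and strictly positive. Applying \cref{Axiom:invariance_rotation} with $O = P^\top$ yields
$$\Lambda(P^\top \Sigma P, \Xi, P^\top R P) = P^\top \Lambda(\Sigma, \Omega, R) P.$$
I then apply \cref{Axiom:split_invariance} with $D = \Xi^{-1/2}$, which sends $\Xi$ to $D \Xi D = \id$. Setting $\Lc := P \Xi^{1/2}$, so that $\Lc^\top = \Xi^{1/2} P^\top$ and $\Lc^{-\top} = P \Xi^{-1/2}$, one immediately checks that $\Lc \Lc^\top = \Omega$ and that the transformed arguments are precisely $\hat \Sigma = \Lc^\top \Sigma \Lc$ and $\hat R = \Lc^\top R \Lc^{-\top}$. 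Combining the two invariances therefore gives
$$\Lambda(\hat \Sigma, \id, \hat R) = \Lc^\top \Lambda(\Sigma, \Omega, R) \Lc,$$
equivalently $\Lambda(\Sigma, \Omega, R) = \Lc^{-\top} \Lambda(\hat \Sigma, \id, \hat R) \Lc^{-1}$.

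Next, I apply \cref{Axiom:invariance_rotation} once more with an arbitrary orthogonal $U$: since $U^\top \id U = \id$, one obtains
$$\Lambda(\hat \Sigma, \id, \hat R) = U \Lambda(U^\top \hat \Sigma U, \id, U^\top \hat R U) U^\top,$$
and substituting into the previous display produces the claimed identity. Although the above construction singles out the specific factor $\Lc = P \Xi^{1/2}$, any alternative factorization $\Lc' = \Lc V$ with $V$ orthogonal reproduces the same formula after the redefinition $U \mapsto V^\top U$, so the representation holds for every $\Lc$ with $\Omega = \Lc \Lc^\top$ and every orthogonal $U$.

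The main obstacle, though largely a matter of bookkeeping, is the opposite behaviour of $\Sigma$ and $\Omega$ under split invariance ($D^{-1}$ versus $D$) together with the fact that $R$ and $\Lambda$ are in general non-symmetric; one must track these conventions carefully so that the three transformed observables of the reduced problem line up exactly with $\hat \Sigma$, $\id$ and $\hat R$ and so that the outer factors reassemble into $\Lc^{-\top}$ and $\Lc^{-1}$ rather than any other combination.
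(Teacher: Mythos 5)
Your proof is correct and follows exactly the route the paper sketches (its proof is a one-line outline: apply rotational invariance, then split invariance with $D=\Xi^{-1/2}$ to turn $\Omega$ into $\id$, then rotational invariance again); you have simply filled in the bookkeeping, and your checks of $\hat\Sigma$, $\hat R$, the outer factors $\Lc^{-\top},\Lc^{-1}$, and the independence of the choice of factorization $\Lc$ are all accurate.
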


\begin{proof}
The lemma is obtained by applying sequentially rotational invariance, split invariance and again rotational invariance. The first two transformations can be used in order to remove the dependency in $\Omega$ as the second argument of the $\Lambda(\Sigma, \Omega, R)$ function.
\end{proof}
When one discards the influence of the response matrix, the model can further be characterised as shown by the next proposition.

\begin{propositionp}{\ref*{prop:kyle_sym_unique_main}}
\label{prop:explicit_invariant_lambda}
A return covariance based cross-impact model $\Lambda$ that is both split-invariant and rotation-invariant (\cref{Axiom:invariance_rotation,Axiom:split_invariance}) can always be written in the form
	$$
	\Lambda(\Sigma, \Omega) = \Lc^{-\top} U F(\bmu) U^\top \Lc^{-1},
	$$
	where
		\begin{align*}
	    \Omega = \Lc \Lc^\top \quad ; \quad 
	    \hat \Sigma := \Lc^{\top} \Sigma \Lc \quad ; \quad 
	    U^\top \hat{\Sigma} U := \diag{\bmu} \quad ; \quad 
	    F(\bmu) := \Lambda(\diag{\bmu}, \id).
	\end{align*}
Furthermore, if $\Lambda$ is cash-invariant and direct-invariant \cref{Axiom:cash_invariance,Axiom:direct_invariance}, then $F(\boldm{\mu}) \propto \diag{\boldm{\mu}}^{1/2}$ and $\Lambda = \Lambda_{\emph{\pkyle}}$ up to a multiplicative constant.
\end{propositionp}

\begin{proof}
For a return covariance based model, we can simply choose from \cref{eq:lemma_rot_split} to fix $U$ as the rotation that diagonalizes the symmetric matrix $\hat \Sigma$, obtaining:
$$
U^\top \hat \Sigma U = \diag{\bmu} \, .
$$
This choice implies
\begin{align*}
    \Lambda(\Sigma, \Omega) 
    &= \Lc^{-\top} U \Lambda(
	\diag{\bmu},
	\id) U^\top \Lc^{-1},
\end{align*}
which yields the result of the first part of the proposition. Furthermore, if we assume $\Lambda$ is cash-invariant and direct-invariant (\cref{Axiom:cash_invariance,Axiom:direct_invariance}), we have
$$
\Lambda(
	\diag{\bmu},
	\id) = \sum_{i=1}^{d} \sqrt{\mu_i} \Lambda(\boldm{e_i} \boldm{e_i}^\top , \boldm{e_i} \, , \boldm{e_i}^\top)
$$
which yields the \pkyle{} model up to a constant.
\end{proof}

The above shows that the only return-based cross-impact model which satisfies all symmetry axioms \cref{Axiom:cash_invariance,Axiom:direct_invariance,Axiom:invariance_rotation,Axiom:split_invariance,Axiom:direct_invariance,Axiom:permutational_invariance} is the \pkyle{} model.

\subsection{Proof of important properties of the \pkyle{} model}
\label{proof:kyle}

This section is dedicated to showing that the \pkyle{} model satisfies all the axioms outlined in section \cref{sec:axioms}. As the fragmentation and invariance axioms were discussed in the previous section, the next lemma shows that the \pkyle{} model is also cross-stable.

\begin{lemma}
	The \pkyle{} model is strongly cross-stable in the sense of \cref{Axiom:strong_cross_stability,Axiom:self-stability} and is not self-stable in the sense of \cref{Axiom:self-stability}. 
\end{lemma}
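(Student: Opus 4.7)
The plan is to handle the two halves of the lemma separately, exploiting the machinery already developed earlier in the excerpt.

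\medskip
\emph{Strong cross-stability.} The cleanest route is to avoid a direct perturbative attack on the explicit formula for $\Lambda_{\pkyle}$ and instead reduce to the fragmentation axioms via \cref{prop:fragm_to_liq_main}. Concretely, I would first check the three hypotheses of \cref{prop:fragm_to_liq_main}(ii): (a) joint continuity of $\Lambda_{\pkyle}$ on $(\spd{n} \times \pd{n} \times \mat{n})$, which follows from continuity of the matrix square root on the positive semi-definite cone and of $M \mapsto M^{-1/2}$ on $\pd{n}$; (b) split invariance, which is the identity $\Lambda_{\pkyle}(D^{-1}\Sigma D^{-1}, D\Omega D, D^{-1} R D) = D^{-1} \Lambda_{\pkyle}(\Sigma,\Omega,R) D^{-1}$, obtained by direct substitution and using that $(DMD)^{1/2} = D M^{1/2}$ when $D$ is positive diagonal and one chooses the same symmetric factorization on both sides; (c) rotational invariance, which follows because all the building blocks $\sqrt{\cdot}$ and $(\cdot)^{-1/2}$ are equivariant under orthogonal conjugation. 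It then remains to check strong fragmentation invariance (\cref{Axiom:strong_fragmentation_invariance}); this is the entry in the summary table and is a short direct verification: for $V \subseteq \ker \Sigma$, every term $\sqrt{(\Omega^{1/2})^\top \Sigma \Omega^{1/2}}$ is insensitive to $\Pi_V \Sigma + \Sigma \Pi_V - \Pi_V \Sigma \Pi_V$, and similarly for $\Omega$ the projections commute through the square root (this is where \cref{remark:kernel_sigma_r} is used to handle $R$). With these inputs, \cref{prop:fragm_to_liq_main}(ii) gives strong cross-stability, which automatically entails weak cross-stability.

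\medskip
\emph{Failure of self-stability.} Here I would produce an explicit counterexample rather than argue abstractly. Take $n=2$, $\Omega = \id$, $V = \mathrm{span}(e_2)$, and $\Sigma = \left(\begin{smallmatrix} 1 & \rho \\ \rho & 1 \end{smallmatrix}\right)$ with $0 < \rho < 1$. With $\Omega = \id$ the Kyle formula reduces to $\Lambda_{\pkyle}(\Sigma, \Omega) = \sqrt{\Sigma}$. Using \cref{lemma:expansion_fragmentation_liquidity} applied to $\Lambda_{\pkyle}$, we have
\begin{equation*}
\Pi_V \Lambda_{\pkyle}(\Sigmaqeps, \Omegaqeps, \Rqeps) \Pi_V \;=\; \varepsilon^{-2}\, \Pi_V \sqrt{\Sigmapeps}\, \Pi_V ,
\end{equation*}
so the task reduces to expanding $\sqrt{\Sigmapeps}$ in powers of $\varepsilon$, where $\Sigmapeps = \bigl(\begin{smallmatrix} 1 & \varepsilon\rho \\ \varepsilon \rho & \varepsilon^2\end{smallmatrix}\bigr)$. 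A block ansatz $\sqrt{\Sigmapeps} = \bigl(\begin{smallmatrix} s_0 & \varepsilon s_1 \\ \varepsilon s_1 & \varepsilon s_2\end{smallmatrix}\bigr)$ with $s_0 = 1 + O(\varepsilon^2)$, $s_1 = \rho$, $s_2 = \sqrt{1-\rho^2}$ matches $\Sigmapeps$ at leading order. Hence $\Pi_V \sqrt{\Sigmapeps}\, \Pi_V = \varepsilon \sqrt{1-\rho^2}\, e_2 e_2^\top + o(\varepsilon)$, which gives
\begin{equation*}
\Pi_V \Lambda_{\pkyle}(\Sigmaqeps, \Omegaqeps, \Rqeps) \Pi_V \;=\; \varepsilon^{-1}\sqrt{1-\rho^2}\, e_2 e_2^\top + o(\varepsilon^{-1}) .
\end{equation*}
Since the right-hand side diverges as $\varepsilon \to 0$, it is not $O(1)$, so \cref{Axiom:self-stability} fails.

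\medskip
\emph{Main obstacle.} The non-trivial step is the block expansion of $\sqrt{\Sigmapeps}$: because the unperturbed matrix $\bar\Pi_V \Sigma \bar\Pi_V$ is rank-deficient along $V$, one cannot simply differentiate the matrix square root, and naive Taylor expansion is invalid. The scaling ansatz above (one factor of $\varepsilon^{1}$, rather than $\varepsilon^{2}$, in the $V$--block of $\sqrt{\Sigmapeps}$) is precisely what forces $\Pi_V \Lambda(\Sigmaqeps, \Omegaqeps, \Rqeps) \Pi_V$ to behave like $\varepsilon^{-1}$ and thereby kills self-stability. Everything else is either invariance bookkeeping or a direct appeal to \cref{prop:fragm_to_liq_main}.
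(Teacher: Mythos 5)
Your proposal is correct, but it follows a genuinely different route from the paper. The paper's proof is a single self-contained perturbative computation: writing $\Sigma = \Gc\Gc^\top$ and observing that $\Lc_\varepsilon := (\bar\Pi_V + \varepsilon\Pi_V)\Lc$ factorizes $\Omegaqeps$, it expresses $\Lambda_{\pkyle}(\Sigmaqeps,\Omegaqeps,\Rqeps) = \Gc O_\varepsilon \Lc_\varepsilon^{-1} = \Gc O_\varepsilon \Lc^{-1}\bar\Pi_V + \varepsilon^{-1}\Gc O_\varepsilon\Lc^{-1}\Pi_V$ with $O_\varepsilon$ orthogonal (hence bounded), reads off weak cross-stability and the $\varepsilon^{-1}$ divergence of $\Pi_V\Lambda\Pi_V$ directly from this expansion (with a genericity argument that the coefficient cannot vanish for all $\Sigma,\Omega$), and disposes of strong cross-stability by noting via \cref{lemma:change_of_basis_diagonal_omega} that $\Omega$ only enters through $\Lc^\top\Sigma\Lc$. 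You instead (a) obtain strong cross-stability as a corollary of \cref{prop:fragm_to_liq_main}(ii) after checking continuity, split/rotational invariance and strong fragmentation invariance, and (b) kill self-stability with an explicit $2\times2$ counterexample ($\Omega=\id$, $\Lambda_{\pkyle}=\sqrt{\Sigma}$), whose asymptotics I confirm: the exact formula $\sqrt{M} = (M+\sqrt{\det M}\,\id)/\sqrt{\operatorname{tr}M + 2\sqrt{\det M}}$ gives $(\sqrt{\Sigmapeps})_{22} = \varepsilon\sqrt{1-\rho^2} + O(\varepsilon^2)$, matching your ansatz. Your counterexample is arguably more convincing than the paper's genericity step, and your use of the fragmentation--stability equivalence is a legitimate shortcut since the paper proves that machinery independently. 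The one soft spot is your verification of strong fragmentation invariance: it is sketched rather than proven, and as stated the axiom requires evaluating $\Lambda_{\pkyle}$ at the singular matrix $\bar\Pi_V\Omega\bar\Pi_V \notin \pd{n}$, outside the declared domain — a gap the paper itself leaves implicit in \cref{table:axioms_models}, but which your argument inherits rather than resolves; the paper's direct expansion sidesteps this entirely.
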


\begin{proof}
	Let $V$ be a linear subspace of $\mathbb{R}^n$) and $\varepsilon>0$.  Note that, writing $\Gc$ for a matrix such that $\Gc \Gc^\top = \Sigma$, for any matrix $\Lc_{\varepsilon}$ such that $\Lc_{\varepsilon} \Lc_{\varepsilon}^\top = \Omegaqeps$, we previously showed that there exists a rotation matrix $O_{\varepsilon} = (\Lc^\top_{\varepsilon} \Gc)^{-1} \sqrt{(\Lc^{\top}_{\varepsilon} \Gc) (\Lc^{\top}_{\varepsilon} \Gc)^\top}$ such that we have
	$$
	\Lambda_{\pkyle}(\Sigmaqeps, \Omegaqeps, \Rqeps) = \Gc O_{\varepsilon} \Lc_{\varepsilon}^{-1}.
	$$
	However, $\Omegaqeps = (\bar \Pi_V + \varepsilon \Pi_V) \Omega (\bar \Pi_V + \varepsilon \Pi_V) = (\bar \Pi_V + \varepsilon \Pi_V) \Lc  \Lc^\top  (\bar \Pi_V + \varepsilon \Pi_V) = [(\bar \Pi_V + \varepsilon \Pi_V) \Lc] [(\bar \Pi_V + \varepsilon \Pi_V) \Lc]^\top$. Thus,
	\begin{align*}
	\Lambda_{\pkyle}(\Sigmaqeps, \Omegaqeps, \Rqeps) &= \Gc O_{\varepsilon} [(\bar \Pi_V + \varepsilon \Pi_V) \Lc]^{-1} \\
	&= \Gc O_{\varepsilon}\Lc^{-1} (\bar \Pi_V + \dfrac{1}{\varepsilon} \Pi_V) \\
	&= \Gc O_{\varepsilon}\Lc^{-1} \bar \Pi_V + \dfrac{1}{\varepsilon} \Gc O_{\varepsilon} \Lc^{-1} \Pi_V.
	\end{align*}
	Using the symmetry of the \pkyle{} model, the above yields:
	$$
	\Lambda_{\pkyle}(\Sigmaqeps, \Omegaqeps, \Rqeps) = \bar \Pi_V \Lc^{-\top}  O_{\varepsilon}^\top \Gc^\top +\dfrac{1}{\varepsilon} \Pi_V \Lc^{-\top}  O_{\varepsilon}^\top \Gc^\top.
	$$
	Thus, we have
	\begin{align*}
	\bar \Pi_V  \Lambda_{\pkyle}(\Sigmaqeps, \Omegaqeps, \Rqeps) \Pi_V &= \bar \Pi_V \Lc^{-\top}  O_{\varepsilon}^\top \Gc^\top \Pi_V \\
	\Pi_V  \Lambda_{\pkyle}(\Sigmaqeps, \Omegaqeps, \Rqeps) \bar \Pi_V &= \Pi_V \Gc O_{\varepsilon}\Lc^{-1} \bar \Pi_V \, .
	\end{align*}
	Since $O_{\varepsilon}^\top$ is an orthogonal matrix, we have
	\begin{align*}
	\bar \Pi_V  \Lambda_{\pkyle}(\Sigmaqeps, \Omegaqeps, \Rqeps) \Pi_V &\underset{\varepsilon \to 0}{=} O(1) \\
	\Pi_V  \Lambda_{\pkyle}(\Sigmaqeps, \Omegaqeps, \Rqeps) \bar \Pi_V &\underset{\varepsilon \to 0}{=} O(1) \, ,
	\end{align*}
	which proves that \pkyle{} satisfies \cref{Axiom:weak_cross_stability}. Furthermore, 
	$$
	\Pi_V  \Lambda_{\pkyle} \Pi_V = \dfrac{1}{\varepsilon} \Pi_V \Lc^{-\top}  O_{\varepsilon}^\top \Gc^\top \Pi_V,
	$$ 
	so that unless $\Pi_V \Lc^{-\top}  O_{\varepsilon}^\top \Gc^\top \Pi_V = 0$, we have:
	$$
       \lvert \lvert  \Pi_V  \Lambda_{\pkyle} \Pi_V \rvert \rvert = \varepsilon^{-1} \lvert \lvert \Pi_V
        \Lc^{-\top}  O_{\varepsilon}^\top 
	\Gc^\top \Pi_V \rvert \rvert \underset{\varepsilon \to 0}{\to} \infty \, .
	$$ 
	Choosing diagonal $\Sigma$ and $\Omega$ such that $\Pi_V \Lc \neq 0$ and $\Gc\Pi_V \neq 0$, we see that $\Pi_V \Lc^{-\top}  O_{\varepsilon}^\top \Gc^\top \Pi_V = 0$ cannot hold for all $\Sigma, \Omega$. This shows that \pkyle{} does not satisfy \cref{Axiom:self-stability}. 
	Finally, notice that by using \cref{lemma:change_of_basis_diagonal_omega} one can make $\Omega$ appear only in the combination $\Lc^\top \Sigma \Lc$, which is insensitive to the components of $\Omega$ belonging to the kernel of $\Sigma$, which proves strong cross-stability (\cref{Axiom:strong_cross_stability}).
\end{proof}

\clearpage

\section{Data}
\label{app:data}

This appendix contains details on the datasets and processing used to apply the different models.

\subsection{Crude contracts}
\label{app:data:crude}

\begin{figure}[t!]
    \centering
    \includegraphics[width=0.95\columnwidth]{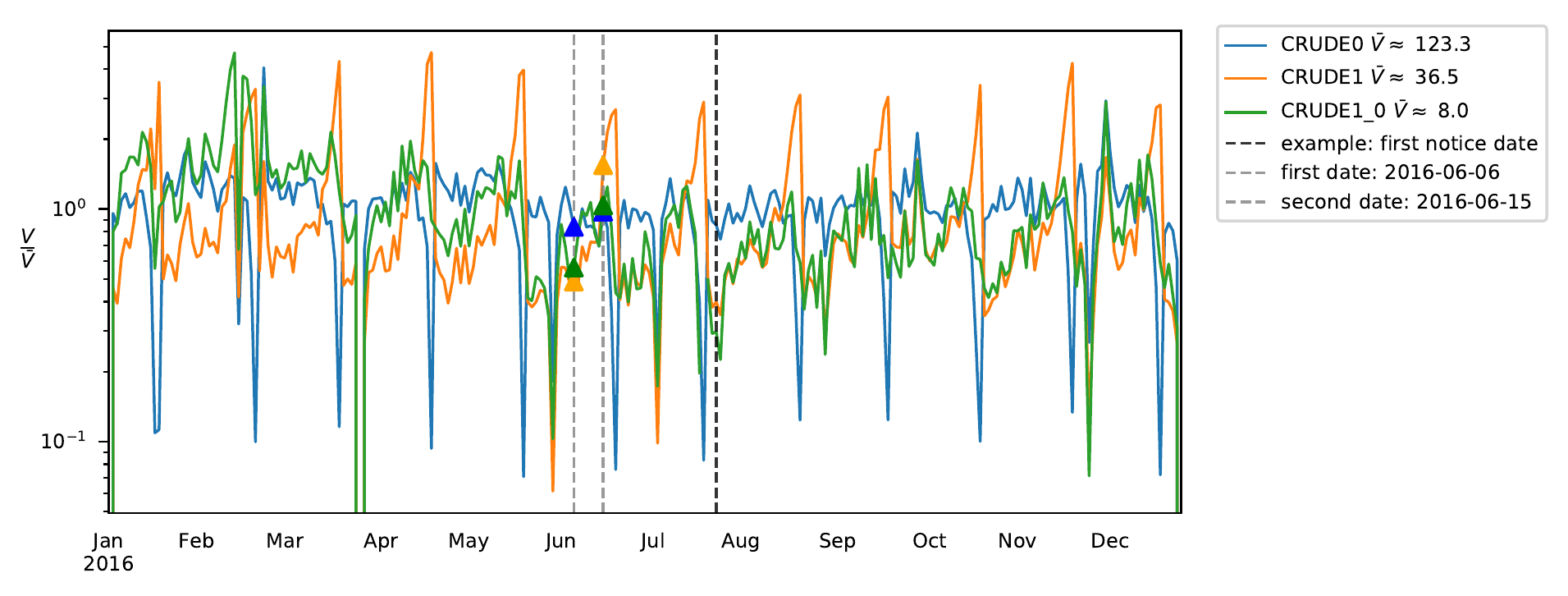}
    \caption{\textbf{Number of traded NYMEX Crude oil futures and Calendar Spread contracts (in thousands) relative to daily number of traded contracts.} \\ 
    The number of contracts sold relative to the daily average is shown for the front month contract (in blue), the subsequent month (in orange) and the Calendar Spread (in green). The average number of traded NYMEX Crude oil futures and Calendar Spread contracts $\bar{V}$ over 2016 is shown in the upper right corner. Vertical dashed lines show  specific dates. An example of first notice date for the front month contract is shown in bold black. After the first notice date, holders of the future contract may ask for physical delivery of the underlying. We also show two dates away from a first notice date: the 6th and 15th of June 2016. Colored triangles show the relative number of contracts exchanged on these dates. Note that the number of contracts is represented in thousands and was not adjusted by the basis point, so that the underlying of each contract is 1000 barrels of oil.}
    \label{fig:crude_heterogeneity_traded_volume}
\end{figure}

\paragraph*{Description of the dataset} We collected trades and quotes data from January 2016 to December 2017, between 9:30AM to 7:30PM UTC, where most of the trading takes place in our dataset, removing 30 minutes around the opening of trading hours to mitigate intraday seasonality. After filtering and processing, we have a total of 430 days in our sample (237 in 2016 and 193 in 2017). We highlight below two important features of our pre-processing for the estimation of $\Sigma$, $\Omega$ and $R$. 

\paragraph*{Pre-processing: accounting for non-stationarity} Overall, the front month contract CRUDE0 is by far the most liquid, followed by the subsequent month contract CRUDE1 and the calendar spread CRUDE1\_0. However, there are strong seasonal dependencies which are shown in \cref{fig:crude_heterogeneity_traded_volume}. For example, the subsequent month contract becomes more liquid as one approaches the maturity of the front month contract. Global estimators of $\Sigma$, $\Omega$ and $R$ would thus be biased by this varying liquidity $\boldm{\omega}$ ($\boldm{\sigma}$ also appears to follow a non-stationary pattern, but is not shown here). Thus, we used local (daily) estimators of price volatility $\boldm{\sigma_t}$ and liquidity $\boldm{\omega_t}$, and built local covariance estimators $\Sigma_t$ and $\Omega_t$ by assuming stationarity of the correlations $\varrho = \diag{\boldm{\sigma_t}}^{-1} \Sigma_t  \diag{\boldm{\sigma_t}}^{-1}$ and $\varrho_\Omega = \diag{\boldm{\omega_t}}^{-1} \Omega_t \diag{\boldm{\omega_t}}^{-1}$. We estimate volatility and liquidity with a simple standard deviation: $\sigma_{i,t}^2 = \langle \Delta p_{i,t}^2 \rangle$ and $\omega_{i,t}^2 = \langle q_{i,t}^2 \rangle$, where the average $\langle \cdot \rangle$ is computed using data on day $t$.

\begin{figure}[h!]
  \centering
    \includegraphics[width=0.8\linewidth]{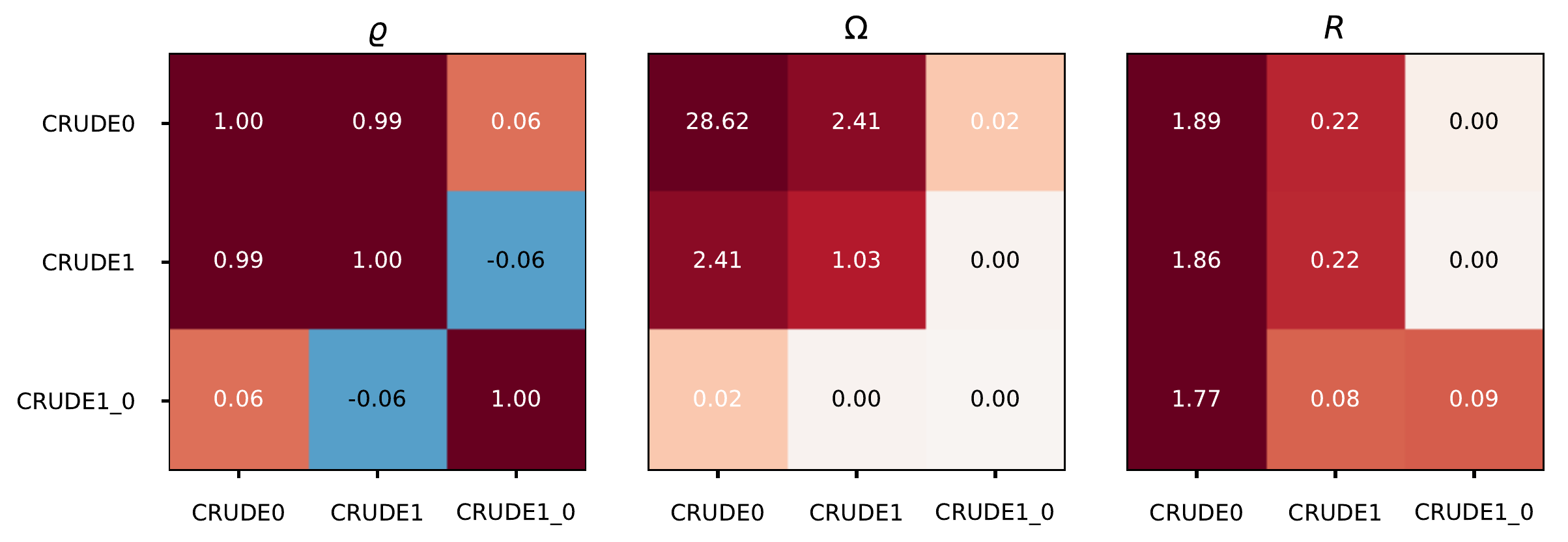}
    \caption{  \textbf{Estimates of $\varrho$, $\Omega$ and $R$ for Crude contracts (in MUSD). \\} The return correlation matrix $\rho$ (left), order flow covariance matrix $\Omega$ (center) and response matrix $R$ (right) were estimated using 2016 data and computed on the 6th of June 2016. This date represents the typical behavior of these contracts far away from the first notice date, before rolling effects become relevant. To highlight the amount of notional traded, order flow is reported in millions of exchanged dollars according to the average value of each contract on the 6th of June 2016. Though non-nill, order flow covariance of Calendar Spread thus appears small because traded notional is much smaller than on each leg of the futures contract.}
  \label{fig:crude_price_corr}
\end{figure}

\paragraph*{Structure of $\rho$, $\Omega$ and $R$} \cref{fig:crude_price_corr} reports the estimators of $\varrho$, $\Omega$ and $R$ for the 6th of June 2016. The figure shows $\varrho$ has one zero-volatility direction and one direction of very small fluctuations. Thus models which satisfy fragmentation invariance (\cref{Axiom:strong_fragmentation_invariance,Axiom:semi_strong_fragmentation_invariance,Axiom:weak_fragmentation_invariance}) should give better predictions. On the other hand, $\Omega$ highlights the difference in liquidity of our assets. Thus, we should be cautious of models which do not satisfy stability axioms. Indeed, these will not penalize trading directions of small liquidity. 

\paragraph*{Pre-processing: cleaning estimators}  As illustrated in \cref{fig:crude_price_corr}, where the structure of $\Sigma$, $\Omega$ and $R$ are shown for a typical day, one can appreciate that the correlation between the two future contracts CRUDE0 and CRUDE1 is close to one, whereas the correlation with the Calendar Spread contract is very small, due to the small volatility of the fluctuations along the relative mode. Because of these effects, the sign of the Calendar Spread correlations with CRUDE0 and CRUDE1 is non-trivial to estimate: due to microstructural effects, the measured correlation is dominated by tick-size related effects \footnote{To test this hypothesis, we estimated the empirical smallest eigenvalue of the covariance matrix for multiple futures contract as a function of relative tick size (not shown). If price changes of the Calendar Spread were given by the legs of the contract, this eigenvalue should be equal to zero. However, we found that as the tick size increases, so does the smallest eigenvalue away from zero. This thus validates our hypothesis and justifies the need for additional processing of futures data.}. In fact, empirical price changes of the Calendar Spread are not given by the difference of price changes of the legs. To solve this issue, we impose the price changes of the Calendar Spread according to the price changes of the futures contracts.

\begin{figure}[h!]
    \centering
    \includegraphics[width=0.8\linewidth]{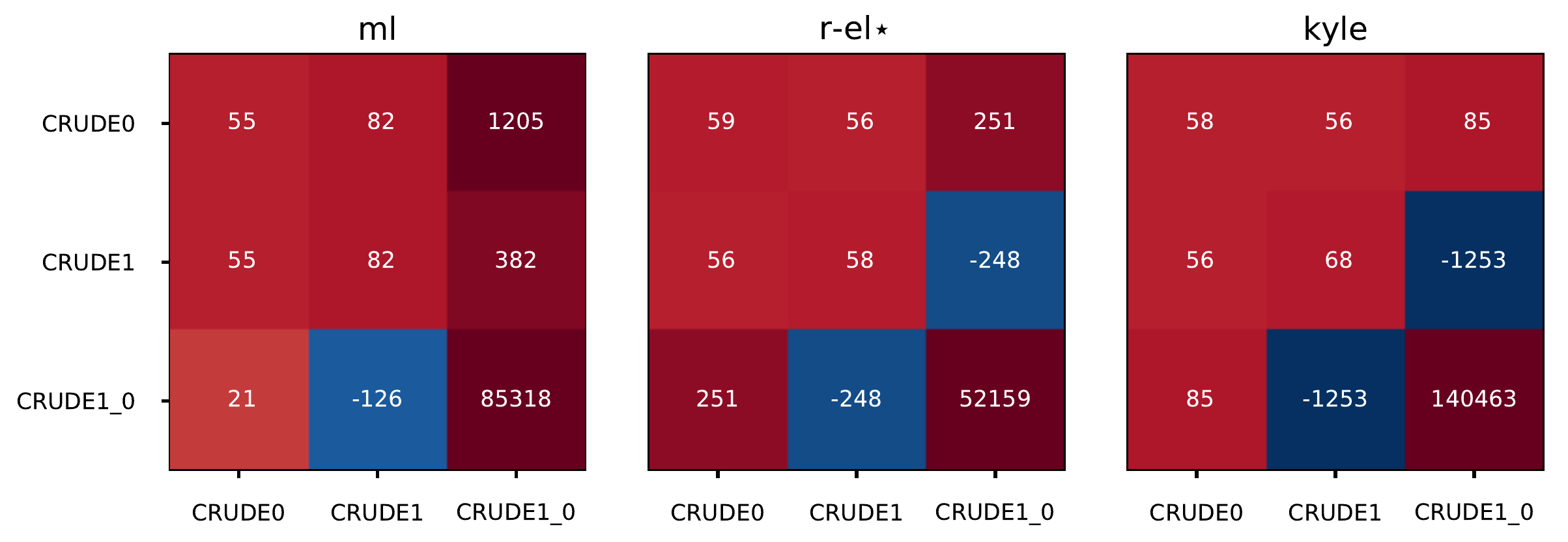}
    \caption{  \textbf{Values of different cross-impact models for Crude contracts.}\\
    We report the values of the \rmle{} (left), \relm{}$\star$ (center) and \pkyle{} (right) cross-impact models for the covariances of the 6th of June 2016. Units are chosen to represent the relative price change in basis points ($10^{-4}$ of the asset price) by hundred  million USD worth of contract traded.}
    \label{fig:impact_crude}
\end{figure}

\paragraph*{Cross-impact models for Crude oil contracts} \cref{fig:impact_crude} shows the calibrated \rmle{}, \relm{}$\star$ and \pkyle{} models. Each satisfies weak fragmentation invariance (\cref{Axiom:weak_fragmentation_invariance}). Therefore, they prevent arbitrage which would trade the physical Calendar Spread contract against the synthetic Calendar Spread (made up of CRUDE0 and CRUDE1). However, \rmle{} and \pkyle{} are not self-stable (\cref{Axiom:self-stability}) while the \relm{}$\star$ model is. This explains why impact from trading the illiquid Calendar Spread is much larger in the \rmle{} and \pkyle{} models than in the \relm{}$\star$ model.

\subsection{Bonds and indices}
\label{app:data:bonds}

\paragraph*{Description of the dataset} We look at 10-year US Treasury note futures and the E-MINI futures. We collect data from the Chicago Mercantile Exchange and use the first two upcoming maturities of both contracts (respectively called SPMINI and SPMINI3 for E-MINI contracts and 10USNOTE and 10USNOTE3 for 10-year US treasury notes). E-Mini futures are quarterly, financially settled contracts with maturities in March, June, September and December. At expiry, the final settlement price of E-MINI futures is a proxy for the S\&P500 index using the opening prices of the underlying stocks belonging to the index. Similarly, the 10-year treasury note futures are quaterly, financially settled contracts with maturities in March, June, September and December. At expiry, the final settlement price is volume weighted average price of past trades on the underlying treasury note.\footnote{This is a simplification of the settlement rules to emphasize the expected value of the final settlement price. Further details about the final settlement price of E-MINI futures and 10-year US Treasury Note futures can be found in the CME Rulebook.} We collected trades and quotes data from January 2016 to December 2017, between 9AM to 7PM UTC, where most of the trading takes place in our dataset. After filtering days for which data for one product was missing, we keep a total of 160 days (75 in 2016 and 85 in 2017). We highlight below one important pre-processing step for the estimation of $\Sigma$, $\Omega$ and $R$.

\paragraph*{Pre-processing: accounting for non-stationarity} The same non-stationary behavior observed for Crude Oil futures contract is observed here. Thus we adopt the same estimation procedure for the local covariance estimators $\Sigma_t$ and $\Omega_t$ by assuming stationarity of the correlations $\varrho = \diag{\boldm{\sigma_t}}^{-1} \Sigma_t  \diag{\boldm{\sigma_t}}^{-1}$ and $\varrho_\Omega = \diag{\boldm{\omega_t}}^{-1} \Omega_t \diag{\boldm{\omega_t}}^{-1}$.

\begin{figure}[H]
    \centering
    \includegraphics[width=0.8\linewidth]{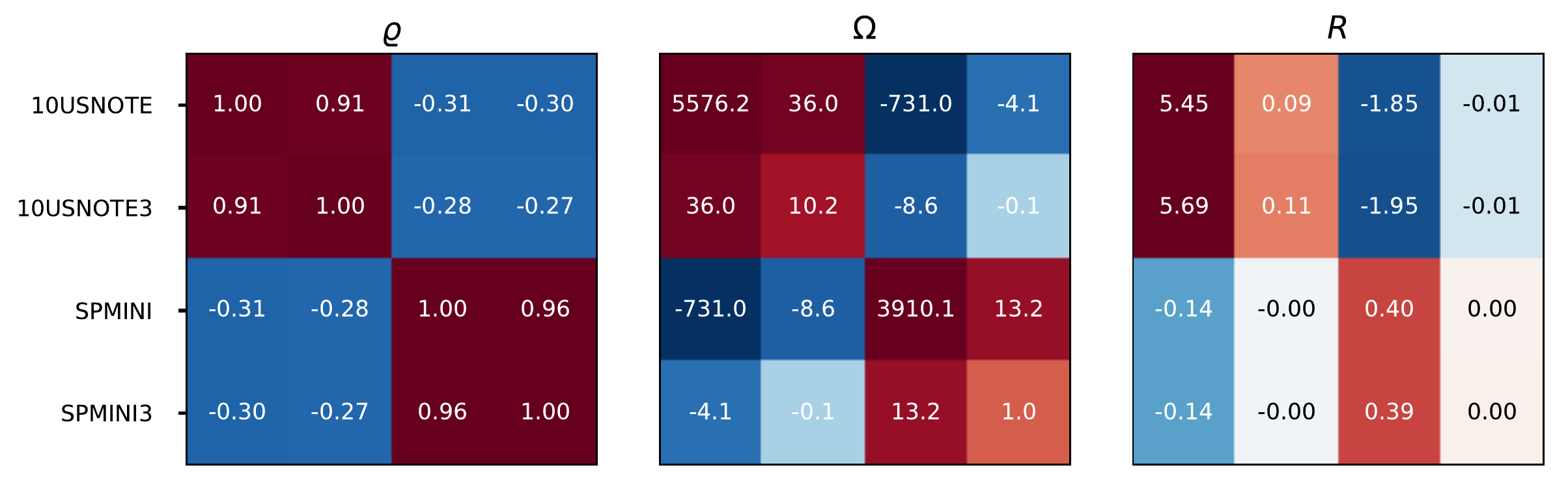}
    \caption{ \textbf{Estimates of  $\varrho$,  $\Omega$ and  $R$ for bonds and indices (in MUSD).}\\
    The return correlation matrix $\rho$ (left), order flow covariance matrix $\Omega$ (center) and response matrix $R$ (right) were estimated using 2016 data and computed on the 17th of August 2016. To highlight the amount of notional traded, order flow is reported in millions of exchanged dollars according to the average value of each contract on the 17th of August 2016. Basis points were accounted for, so that one traded unit of the futures contracts entitles the owner to one unit of the underlying.}
    \label{fig:10usnotes_price_corr}
\end{figure}

\paragraph*{Structure of $\rho$, $\Omega$ and $R$} \cref{fig:10usnotes_price_corr} shows the estimators of $\varrho$, $\Omega$ and $R$ for the 17th of August 2016. Contracts with the same underlying are strongly correlated. Thus, $\rho$ shows 2 by 2 blocks of strongly correlated contracts and an anti-correlation between bonds and futures. Liquidity is heterogeneous as front month contracts are more actively traded. In this configuration, the discriminating factor between models should be stability axioms rather than fragmentation axioms. 

\begin{figure}[H]
    \centering
    \includegraphics[width=0.8\linewidth]{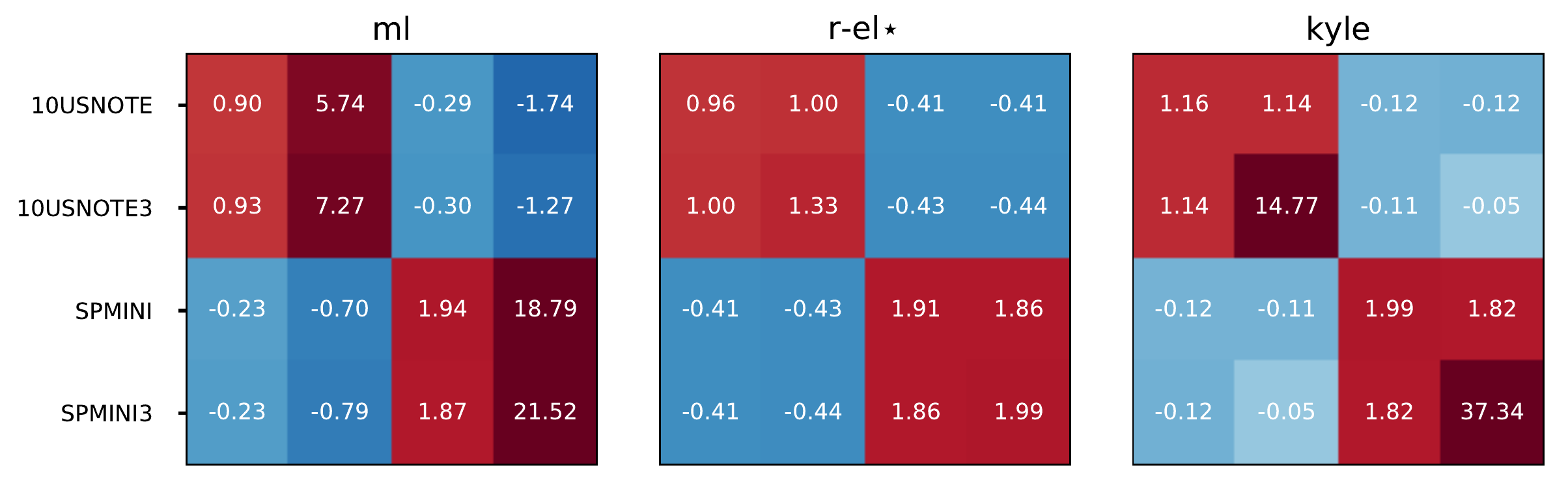}
    \caption{\textbf{Values of different cross-impact models for bonds and indices.}\\
    We report the values of the \rmle{} (left), \relm{}$\star$ (center) and \pkyle{} (right) cross-impact models for the covariances of the 17th of August 2016. Units are chosen to represent the relative price change in basis points ($10^{-4}$ of the asset price) by hundred  million USD worth of contract traded.}
    \label{fig:usnote_impact_risk}
\end{figure}

\paragraph*{Cross-impact models for bonds and indices} \cref{fig:usnote_impact_risk} shows the \rmle{}, \relm{}$\star$ and \pkyle{} models calibrated on bonds and indices. The \relm{} and \pkyle{} models are weakly cross-stable while the \rmle{} model is not. Thus \rmle{} assigns large impact to the less liquid contracts 10USNOTE3 and SPMINI3. Similarly, the self-stability of \relm{} explains the small impact predicted if one trades illiquid contracts. Reassuringly, all models correctly capture the negative index-bonds correlation.

\subsection{Stocks}
\label{app:data:stocks}

\paragraph*{Description of the dataset} We chose stocks which were in the S$\&$P500 index between January 2016 and December 2017. The resulting universe is made up of with 393 high market cap and liquid stocks. We chose such stocks to build a similar asset universe as in previous studies \cite{wang2015price,wang2016cross,wang2017grasping,Benzaquen2017DissectingAnalysis,pasquariello2015strategic}. We collect trades and quotes data between 2PM and 9:30PM UTC, removing the beginning and end of the trading period to focus on the intraday behavior of liquidity and volatility and circumvent intraday non-stationary issues. We collected trades and quotes data from January 2016 to December 2017, between  2PM and 9:30PM UTC, to focus on the intraday behavior of liquidity and volatility and circumvent intraday non-stationary issues. After filtering days for which data for one product was missing, we keep a total of 302 days (154 in 2016 and 148 in 2017). Some summary characteristics of our sample are presented in \cref{table:statistics_stocks}. The distribution of stocks in each sector is given in \cref{fig:sectors_barplot}.

\begin{table}[H]
    \centering
    \begin{tabular}{@{}lcccccccccc@{}}
    \multicolumn{1}{c}{} & \multicolumn{3}{c}{Quantile} \\
    \cmidrule{2-4}  & $10\%$ & $50\%$ & $90\%$ \\
    Relative tick size (in $\%$) & 1.6& 2.5& 4.6 \\ 
    Number of trades per day (in thousands) & 5.9& 12.6& 29.4 \\ 
    Daily turnover (in MUSD) & 28.5& 56.1& 116.2 \\ 
    \end{tabular}
    \caption{\textbf{Summary statistics for our sample of stocks.}}
    \label{table:statistics_stocks}
\end{table}

\begin{figure}[H]
    \centering
    \includegraphics[width=0.5\columnwidth]{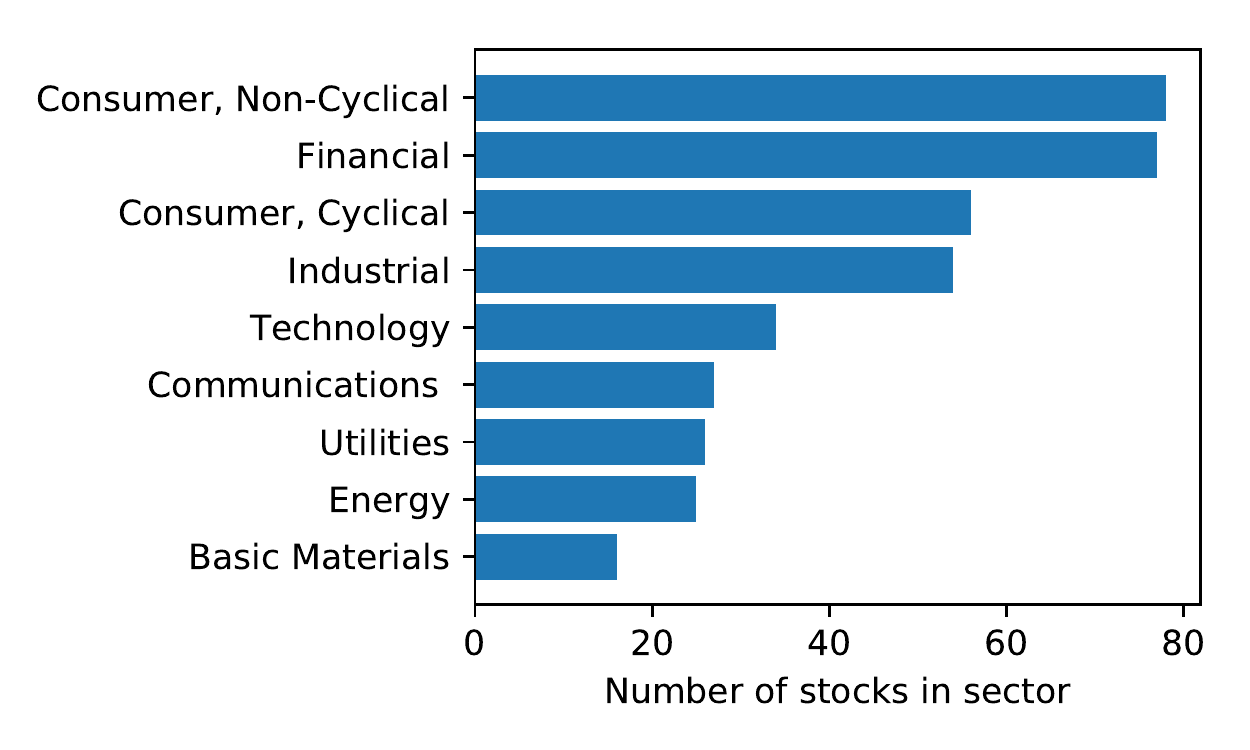}
    \caption{\textbf{Sector breakdown for the 393 of stocks used in the stocks dataset.}}
    \label{fig:sectors_barplot}
\end{figure}

\paragraph*{Pre-processing} To a lesser degree than on the previous datasets, the stock dataset shows non-stationarity in both volatility and liquidity. Thus we adopt the same estimation procedure for the local covariance estimators $\Sigma_t$ and $\Omega_t$ by assuming stationarity of the correlations $\varrho = \diag{\boldm{\sigma_t}}^{-1} \Sigma_t  \diag{\boldm{\sigma_t}}^{-1}$ and $\varrho_\Omega = \diag{\boldm{\omega_t}}^{-1} \Omega_t \diag{\boldm{\omega_t}}^{-1}$.

\begin{figure}[H]
    \centering
    \includegraphics[width=0.82\linewidth]{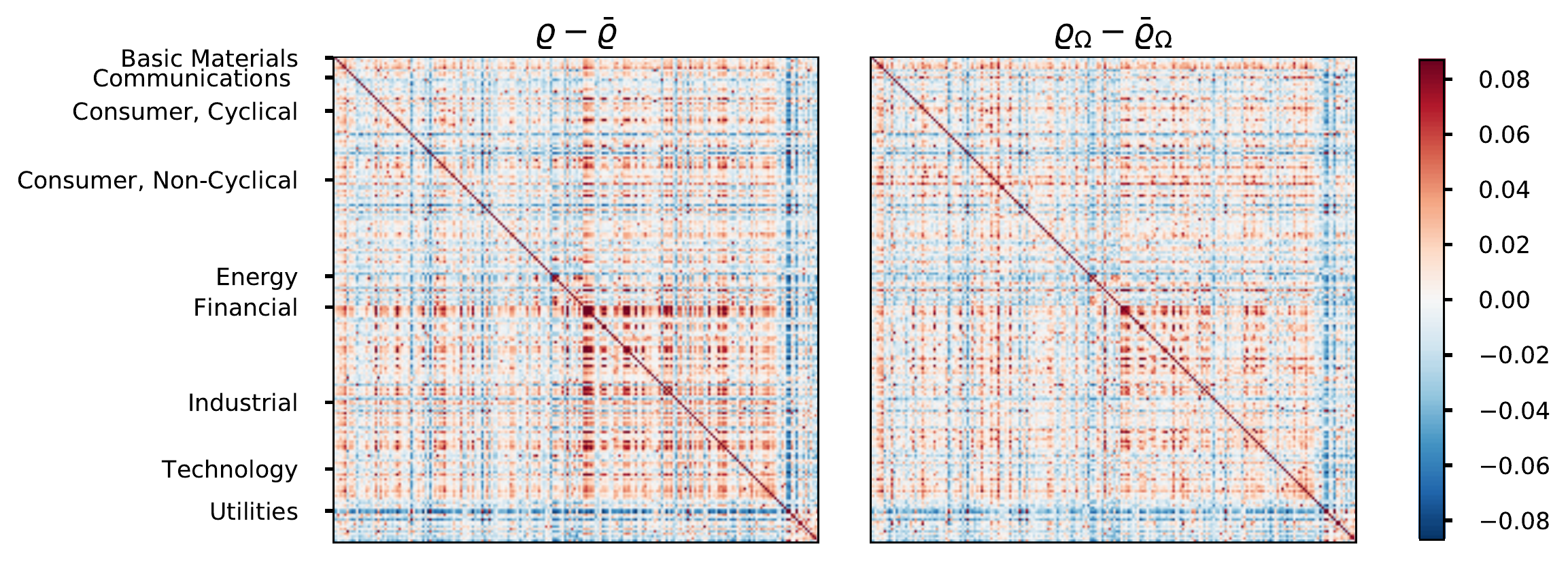}
    \caption{ \textbf{Estimated price and order flow correlation matrices $\varrho$, $\varrho_{\Omega}$ for stocks.}\\
    We represent the return correlation matrix $\rho$ (left), order flow correlation matrix $\varrho_\Omega$ (right) estimated on 2016. To highlight the amount of notional traded, order flow is reported in millions of exchanged dollars according to the average value of each contract on the 17th of August 2016. Correlation matrices were represented instead of covariance matrices due to the large volume heterogeneities between stocks. Stocks were grouped by sectors to highlight the blockwise structure of these matrices.}
    \label{fig:stocks_price_corr}
\end{figure}

\paragraph*{Structure of $\varrho$, $\varrho_\Omega$ and $R$} \cref{fig:stocks_price_corr} shows estimators of $\varrho$, $\varrho_\Omega$. We report correlations instead of covariances to highlight the blockwise structure of these matrices. For the same reasons, $R$ is not shown but presents a bandwise structure one expects from heterogeneity in liquidity. Pairwise price and order flow correlations between assets are small. Thus, improvement of cross-impact models over direct models should be lower than in previous applications. For more details about the structure of the price and volume covariance matrices, see~\cite{Benzaquen2017DissectingAnalysis}.

\begin{figure}[H]
    \centering
    \includegraphics[width=\linewidth]{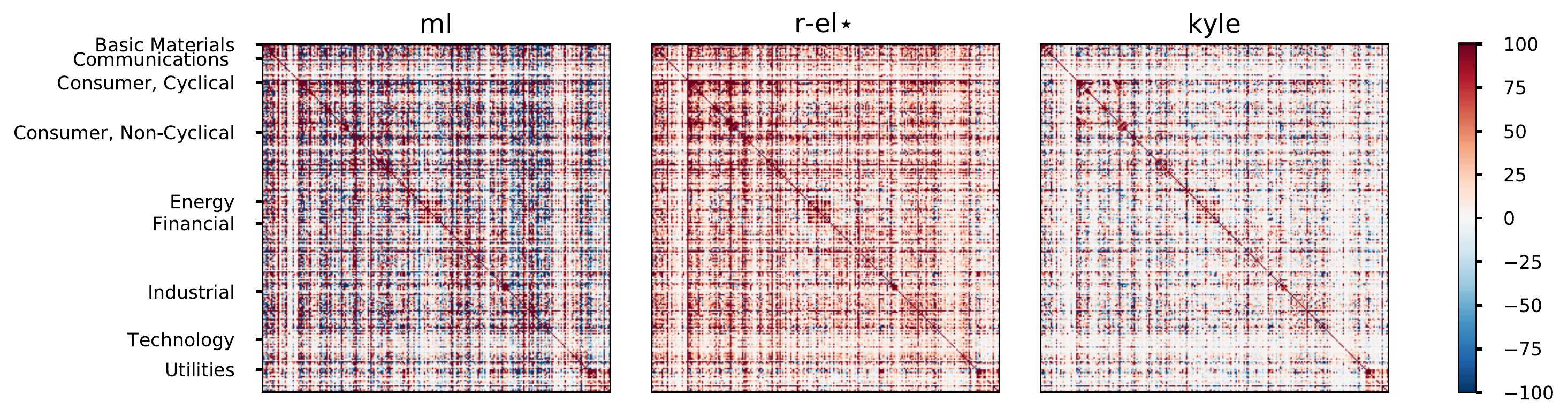}
    \caption{  \textbf{Values of different cross-impact models for stocks.} \\
     We report the values of the \rmle{} (left), \relm{}$\star$ (center) and \pkyle{} (right) cross-impact models. Units are chosen to represent the relative price change in basis points ($10^{-4}$ of the asset price) by hundred  million USD worth of instruments traded.}
     \label{fig:stocks_impact_risk}
\end{figure}

\paragraph*{Cross-impact models for stocks} \cref{fig:stocks_impact_risk} shows the \rmle{}, \relm{}$\star$ and \pkyle{} models calibrated on the stocks dataset. At first glance, each model appears to present a blockwise structure similar to that of $\varrho$, $\varrho_\Omega$. However, the \rmle{} model does not satisfy weak cross-stability and thus predicts large impact on liquid stocks if one trades illiquid stocks. By construction the \relm{}$\star$ model weighs most impact on the market mode. Finally, the \pkyle{} model looks like a symmetrized version of the \relm{}$\star$ model.

\section{Goodness-of-fit}
\label{app:goodness_of_fit}

We report in \cref{fig:gof} the numerical results of the goodness-of-fit tests run on each model and dataset.

\begin{landscape}

\renewcommand{\arraystretch}{1.1}

\begin{table}
\centering
\scalebox{0.75}{
\begin{tabular}{llccccccccccc}
    \toprule \multicolumn{1}{l}{Dataset} & \multicolumn{1}{l}{Score} & \multicolumn{1}{c}{\pdir} & \multicolumn{1}{c}{\pwhi} & \multicolumn{1}{c}{\pwhi$\star$} & \multicolumn{1}{c}{\pelm} & \multicolumn{1}{c}{\pelm$\star$} & \multicolumn{1}{c}{\pkyle} & \multicolumn{1}{c}{\rdir} & \multicolumn{1}{c}{\rmle} & \multicolumn{1}{c}{\relm} & \multicolumn{1}{c}{\relm$\star$} & \multicolumn{1}{c}{\rkyle} \\
     \cmidrule{1-13} \\
     Crude Futures \\
     & $\rsq_{\textnormal{in}}(I_{\sigma})$ &  $ 0.01 \pm 0.01 $&  $ 0.03 \pm 0.01 $&  $ 0.06 \pm 0.01 $&  $ 0.18 \pm 0.01 $&  $ 0.18 \pm 0.01 $&  $ 0.35 \pm 0.01 $&  $ 0.27 \pm 0.01 $&  $ 0.37 \pm 0.01 $&  $ 0.37 \pm 0.01 $&  $ 0.37 \pm 0.01 $&  $ 0.22 \pm 0.01 $
 \\
     & $\rsq_{\textnormal{out}}(I_{\sigma})$ &  $ 0.01 \pm 0.01 $&  $ 0.04 \pm 0.01 $&  $ 0.06 \pm 0.01 $&  $ 0.18 \pm 0.01 $&  $ 0.18 \pm 0.01 $&  $ 0.35 \pm 0.01 $&  $ 0.27 \pm 0.01 $&  $ 0.37 \pm 0.01 $&  $ 0.37 \pm 0.01 $&  $ 0.37 \pm 0.01 $&  $ 0.22 \pm 0.01 $
 \\
     & $\rsq_{\textnormal{in}}(J_{\sigma})$ &  $ 0.33 \pm 0.01 $&  $ 0.32 \pm 0.01 $&  $ 0.22 \pm 0.01 $&  $ 0.27 \pm 0.01 $&  $ 0.27 \pm 0.01 $&  $ 0.46 \pm 0.01 $&  $ 0.40 \pm 0.01 $&  $ 0.45 \pm 0.01 $&  $ 0.46 \pm 0.01 $&  $ 0.46 \pm 0.01 $&  $ 0.31 \pm 0.01 $
 \\
     & $\rsq_{\textnormal{out}}(J_{\sigma})$ &  $ 0.33 \pm 0.01 $&  $ 0.32 \pm 0.01 $&  $ 0.22 \pm 0.01 $&  $ 0.27 \pm 0.01 $&  $ 0.27 \pm 0.01 $&  $ 0.46 \pm 0.01 $&  $ 0.40 \pm 0.01 $&  $ 0.45 \pm 0.01 $&  $ 0.46 \pm 0.01 $&  $ 0.46 \pm 0.01 $&  $ 0.31 \pm 0.01 $
 \\
     & $\rsq_{\textnormal{in}}(\Sigma^{-1})$ &  $-\infty$ &  $ -0.05 \pm 0.02 $&  $ -0.01 \pm 0.02 $&  $ 0.07 \pm 0.02 $&  $ 0.07 \pm 0.02 $&  $ 0.29 \pm 0.02 $&  $-\infty$ &  $ 0.32 \pm 0.02 $&  $ 0.31 \pm 0.02 $&  $ 0.31 \pm 0.02 $&  $ 0.16 \pm 0.02 $
 \\
     & $\rsq_{\textnormal{out}}(\Sigma^{-1})$ &  $-\infty$ &  $ -0.05 \pm 0.02 $ &  $ -0.01 \pm 0.02 $ &  $ 0.07 \pm 0.02 $ &  $ 0.07 \pm 0.02 $ &  $ 0.29 \pm 0.02 $ &  $-\infty$ &  $ 0.31 \pm 0.02 $ &  $ 0.31 \pm 0.02 $ &  $ 0.31 \pm 0.02 $ &  $ 0.16 \pm 0.02 $ 
 \\
     \cmidrule{1-13} \\
     Bonds and indices \\
     & $\rsq_{\textnormal{in}}(I_{\sigma})$ &  $ -0.11 \pm 0.02 $&  $ 0.03 \pm 0.02 $&  $ 0.05 \pm 0.02 $&  $ 0.19 \pm 0.01 $&  $ 0.02 \pm 0.02 $&  $ 0.38 \pm 0.01 $&  $ 0.23 \pm 0.01 $&  $ 0.40 \pm 0.01 $&  $ 0.38 \pm 0.01 $&  $ 0.27 \pm 0.01 $&  $ 0.25 \pm 0.01 $ \\
     & $\rsq_{\textnormal{out}}(I_{\sigma})$ &  $ -0.11 \pm 0.02 $&  $ 0.03 \pm 0.02 $&  $ 0.04 \pm 0.02 $&  $ 0.19 \pm 0.01 $&  $ 0.02 \pm 0.02 $&  $ 0.38 \pm 0.01 $&  $ 0.23 \pm 0.01 $&  $ 0.40 \pm 0.01 $&  $ 0.38 \pm 0.01 $&  $ 0.27 \pm 0.01 $&  $ 0.24 \pm 0.01 $  \\
     & $\rsq_{\textnormal{in}}(J_{\sigma})$ &  $ 0.09 \pm 0.02 $&  $ -0.09 \pm 0.03 $&  $ -0.05 \pm 0.03 $&  $ 0.09 \pm 0.02 $&  $ -0.21 \pm 0.03 $&  $ 0.29 \pm 0.02 $&  $ 0.27 \pm 0.02 $&  $ 0.30 \pm 0.02 $&  $ 0.29 \pm 0.02 $&  $ 0.17 \pm 0.02 $&  $ 0.14 \pm 0.02 $ \\
     & $\rsq_{\textnormal{out}}(J_{\sigma})$ &  $ 0.09 \pm 0.02 $&  $ -0.10 \pm 0.03 $&  $ -0.05 \pm 0.03 $&  $ 0.09 \pm 0.02 $&  $ -0.21 \pm 0.03 $&  $ 0.29 \pm 0.02 $&  $ 0.27 \pm 0.02 $&  $ 0.30 \pm 0.02 $&  $ 0.29 \pm 0.02 $&  $ 0.17 \pm 0.02 $&  $ 0.14 \pm 0.02 $ \\
     & $\rsq_{\textnormal{in}}(\Sigma^{-1})$ &  $ -7.24 \pm 0.21 $&  $ -0.37 \pm 0.04 $&  $ -0.36 \pm 0.04 $&  $ -0.26 \pm 0.03 $&  $ -0.37 \pm 0.03 $&  $ 0.11 \pm 0.03 $&  $ -1.69 \pm 0.05 $&  $ 0.20 \pm 0.03 $&  $ 0.19 \pm 0.03 $&  $ 0.13 \pm 0.03 $&  $ 0.07 \pm 0.03 $ \\
     & $\rsq_{\textnormal{out}}(\Sigma^{-1})$ &  $ -7.23 \pm 0.21 $ &  $ -0.37 \pm 0.04 $ &  $ -0.36 \pm 0.04 $ &  $ -0.26 \pm 0.03 $ &  $ -0.37 \pm 0.03 $ &  $ 0.11 \pm 0.03 $ &  $ -1.71 \pm 0.05 $ &  $ 0.20 \pm 0.03 $ &  $ 0.19 \pm 0.03 $ &  $ 0.13 \pm 0.03 $ &  $ 0.07 \pm 0.03 $ \\
     \cmidrule{1-13} \\
     Stocks \\
     & $\rsq_{\textnormal{in}}(I_{\sigma})$  & $ 0.038 \pm 0.004 $ &  $ -0.025 \pm 0.004 $ &  $ 0.059 \pm 0.004 $ &  $ -0.631 \pm 0.010 $ &  $ -0.128 \pm 0.008 $ &  $ 0.343 \pm 0.003 $ &  $ 0.276 \pm 0.004 $ &  $ 0.373 \pm 0.003 $ &  $ 0.257 \pm 0.003 $ &  $ 0.236 \pm 0.004 $ &  $ 0.239 \pm 0.004 $\\
     & $\rsq_{\textnormal{out}}(I_{\sigma})$ &  $ 0.038 \pm 0.004 $ &  $ -0.031 \pm 0.004 $ &  $ 0.047 \pm 0.004 $ &  $ -0.642 \pm 0.010 $ &  $ -0.133 \pm 0.008 $ &  $ 0.336 \pm 0.003 $ &  $ 0.274 \pm 0.004 $ &  $ 0.358 \pm 0.003 $ &  $ 0.249 \pm 0.003 $ &  $ 0.227 \pm 0.004 $ &  $ 0.232 \pm 0.004 $  \\
     & $\rsq_{\textnormal{in}}(J_{\sigma})$ &  $ 0.732 \pm 0.006 $ &  $ -0.047 \pm 0.012 $ &  $ 0.277 \pm 0.010 $ &  $ -1.770 \pm 0.038 $ &  $ 0.727 \pm 0.005 $ &  $ 0.822 \pm 0.003 $ &  $ 0.480 \pm 0.010 $ &  $ 0.829 \pm 0.003 $ &  $ 0.661 \pm 0.005 $ &  $ 0.753 \pm 0.004 $ &  $ 0.788 \pm 0.004 $ \\
     & $\rsq_{\textnormal{out}}(J_{\sigma})$ &  $ 0.732 \pm 0.006 $&  $ -0.192 \pm 0.013 $&  $ 0.152 \pm 0.012 $&  $ -1.785 \pm 0.038 $&  $ 0.701 \pm 0.005 $&  $ 0.808 \pm 0.004 $&  $ 0.479 \pm 0.010 $&  $ 0.803 \pm 0.004 $&  $ 0.644 \pm 0.005 $&  $ 0.733 \pm 0.005 $&  $ 0.776 \pm 0.004 $ \\
     & $\rsq_{\textnormal{in}}(\Sigma^{-1})$ &  $ -0.311 \pm 0.004 $&  $ -0.061 \pm 0.003 $&  $ -0.056 \pm 0.003 $&  $ -0.262 \pm 0.005 $&  $ -0.369 \pm 0.008 $&  $ 0.214 \pm 0.003 $&  $ 0.180 \pm 0.003 $&  $ 0.215 \pm 0.003 $&  $ 0.126 \pm 0.004 $&  $ 0.090 \pm 0.004 $&  $ 0.082 \pm 0.004 $
 \\
     & $\rsq_{\textnormal{out}}(\Sigma^{-1})$ &  $ -0.293 \pm 0.004 $ &  $ -0.061 \pm 0.003 $ &  $ -0.056 \pm 0.004 $ &  $ -0.260 \pm 0.005 $ &  $ -0.360 \pm 0.008 $ &  $ 0.211 \pm 0.003 $ &  $ 0.180 \pm 0.003 $ &  $ 0.208 \pm 0.003 $ &  $ 0.124 \pm 0.004 $ &  $ 0.089 \pm 0.004 $ &  $ 0.081 \pm 0.004 $ \
\\
\end{tabular}}
\caption{\textbf{Goodness-of-fit scores for each model and dataset.\\} Goodness of fit was measured using two years of data sampled at a time interval of one minute. In-sample data was used to calibrate each cross impact model. Out-of-sample goodness of fit was obtained by applying the calibrated models on never seen before data. We reported as $\infty$ the scores of models which are numerically infinite, but due to clipping appear finite.}
\label{fig:gof}
\end{table}
\end{landscape}

\renewcommand{\arraystretch}{1}

\end{document}